\numberwithin{equation}{section}
\newtheorem{Theorem}{Theorem}[section]
\newtheorem{Lemma}[Theorem]{Lemma}
\newtheorem{Proposition}[Theorem]{Proposition}
{ \theoremstyle{definition}
\newtheorem{Remark}[Theorem]{Remark} }
\begin{document}

\allowdisplaybreaks

\newcommand{\arXivNumber}{1801.07312}

\renewcommand{\PaperNumber}{023}

\FirstPageHeading

\ShortArticleName{On the Linearization Covering Technique}

\ArticleName{On the Linearization Covering Technique\\ and its Application to Integrable Nonlinear\\ Differential Systems}

\Author{Anatolij K.~PRYKARPATSKI~$^{\dag\ddag}$}

\AuthorNameForHeading{A.K.~Prykarpatski}

\Address{$^\dag$~The Department of Physics, Mathematics and Computer Science,\\
\hphantom{$^\dag$}~Cracow University of Technology, Krak\'ow 30-155, Poland}
\Address{$^\ddag$~Ivan Franko State Pedagogical University of Drohobych, Lviv Region, Ukraine}
\EmailD{\href{mailto:pryk.anat@cybergal.com}{pryk.anat@cybergal.com}}

\ArticleDates{Received January 22, 2018, in final form February 28, 2018; Published online March 16, 2018}

\Abstract{In this letter I analyze a covering jet manifold scheme, its relation to the invariant theory of the associated vector fields, and applications to the Lax--Sato-type integrability of nonlinear dispersionless differential systems. The related contact geometry linearization covering scheme is also discussed. The devised techniques are demonstrated for such nonlinear Lax--Sato integrable equations as Gibbons--Tsarev, ABC, Manakov--Santini and the differential Toda singular manifold equations.}

\Keywords{covering jet manifold; linearization; Hamilton--Jacobi equations; Lax--Sato re\-pre\-sen\-ta\-tion; ABC equation; Gibbons--Tsarev equation; Manakov--Santini equation; contact geometry; differential Toda singular manifold equations}

\Classification{17B68; 17B80; 35Q53; 35G25; 35N10; 37K35; 58J70; 58J72; 34A34; 37K05; 37K10}

\section{Introductory notions and examples}\label{Sec_Th1}

Some three years ago I.~Krasil'shchik in his work \cite{Kras} has analyzed the so-called Gibbons--Tsarev equation
\begin{gather}
z_{yy}+z_{t}z_{ty}-z_{y}z_{tt}+1=0, \label{T0}
\end{gather}
and its so-called nonlinear first-order differential \textit{covering}
\begin{gather}
\frac{\partial w}{\partial t}-\frac{1}{z_{y}+z_{t}w-w^{2}}=0,\qquad \frac{\partial w}{\partial y}+\frac{z_{t}-w}{u_{y}+z_{t}w-w^{2}}=0, \label{T0a}
\end{gather}
which for any solution $z\colon \mathbb{R}^{2}\rightarrow\mathbb{R}$ to equation~(\ref{T0}) is compatible for all $(t,y)\in \mathbb{R}^{2}$. He has shown that the latter makes it possible to determine for any smooth solution $w\colon \mathbb{R}^{2}\rightarrow\mathbb{R}$ to the equation~(\ref{T0a}) suitable smooth functions $\psi\colon \mathbb{R}\times\mathbb{R}^{2}\rightarrow \mathbb{R}$ satisfying the associated linear representation of Lax--Sato type:
\begin{gather*}
\frac{\partial\psi}{\partial t}+\frac{1}{z_{y}+z_{t}\lambda-\lambda^{2}} \frac{\partial\psi}{\partial\lambda} =0, \qquad \frac{\partial\psi}{\partial y}-\frac{z_{t}-\lambda}{z_{y}+z_{t}\lambda-\lambda^{2}}\frac{\partial\psi }{\partial\lambda}=0, 
\end{gather*}
which for any solution to the equation (\ref{T0}) is also compatible for all $(t,y)\in \mathbb{R}^{2}$ and an arbitrary parameter $\lambda\in \mathbb{R}$.

In his work \cite{Kras} I.~Krasil'shchik also posed an interesting problem of providing a differential-geometric explanation of the linearization procedure for a given nonlinear differential-geometric relation $J^{1}(\mathbb{R}^{n}\times\mathbb{R}^{2};\mathbb{R})|_{\mathcal{E}}$ in the jet manifold $J^{1}\big(\mathbb{R}^{n}\times\mathbb{R}^{2};\mathbb{R}\big)$, $n\in\mathbb{Z}_{+}$, realizing a compatible \textit{covering}~\cite{BaKrMoVo} for the associated nonlinear differential equation $\mathcal{E}[x,\tau;u]=0$, imbedded into some jet manifold $J^{k}\big(\mathbb{R}^{n}\times\mathbb{R}^{2};\mathbb{R}^{m}\big)$ for some $k,m\in\mathbb{Z}_{+}$. His extended version of explanation of this
procedure, presented in~\cite{Kras}, appeared abstract enough and held it in a hidden form with no new example demonstrating its application. From this point of view, our work seeks to unveil some important points of this linearization procedure in the framework of the classical inhomogeneous vector field equations and present its new and important applications. We consider the jet manifold $J^{k}\big(\mathbb{R}^{n}\times\mathbb{R}^{2}; \mathbb{R}^{m}\big)$ for some fixed $k,m\in\mathbb{Z}_{+}$ and some differential relation~\cite{Grom} in a~general form $\mathcal{E}[x,\tau;u]=0$, satisfied for all $(x;\tau)\in$ $\mathbb{R}^{n}\times\mathbb{R}^{2}$ and suitable smooth mappings $u\colon \mathbb{R}^{n}\times\mathbb{R}^{2}\rightarrow \mathbb{R}^{m}$.

As a new and interesting example, we can take $n=1=m$, $k=2$ and choose a~differential relation $\mathcal{E}[x;y,t;u]=0$ of the form
\begin{gather}
u_{t}u_{xy}-k_{1}u_{x}u_{ty}-k_{2}u_{y}u_{tx}=0, \label{T1}
\end{gather}
the so-called ABC equation, or the Zakharevich equation, first discussed in \cite{Zakh} for $k_{1}+k_{2}-1=0$, where $u\colon \mathbb{R}\times\mathbb{R}^{2}\rightarrow\mathbb{R}$ and $k_{1},k_{2}\in\mathbb{R}$ are arbitrary parameters satisfying the conditions
\begin{gather*}
k_{1}+k_{2}-1=0 \vee k_{1}+k_{2}-1\neq0. 
\end{gather*}
The first case $k_{1}+k_{2}-1=0$ was before intensively investigated in \cite{DuKr,MoSe,Zakh} and recently in \cite{HePrBlPr}, where there was found its Lax--Sato type linearization and many other its interesting mathematical properties were revealed.
For the second case $k_{1}+k_{2}-1\neq0$ the following crucial result was stated in equivalent form by P.A.~Burovskiy, E.V.~Ferapontov and S.P.~Tsarev in \cite{BuFeTs} and recently by I.~Krasil'shchik, A.~Sergyeyev and O.~Morozov in~\cite{KrSeMo}.

\begin{Proposition}
The covering system $J^{1}\big(\mathbb{R}\times\mathbb{R}^{2};\mathbb{R}\big)|_{\mathcal{E}}$ of quasi-linear first-order differential
relations
\begin{gather}
\frac{\partial w}{\partial t}+\frac{u_{t}w}{u_{x}k_{1}(k_{1}+k_{2}-1)}\frac{\partial w}{\partial x}-\frac{w(w+k_{1}+k_{2}-1)u_{tx}}{u_{x}k_{1}}
=0,\nonumber\\
\frac{\partial w}{\partial y}+\frac{u_{y}w}{u_{x}k_{1}(w+k_{1}+k_{2}-1)}\frac{\partial w}{\partial x}-\frac{w(k_{1}+k_{2}-1)u_{yx}}{u_{x}k_{1}}
 =0\label{T3}
\end{gather}
on the jet manifold $J^{1}\big(\mathbb{R}\times\mathbb{R}^{2};\mathbb{R}\big)$ is compatible, that is, it holds for any its smooth solution $w\colon \mathbb{R} \times\mathbb{R}^{2}\rightarrow\mathbb{R}$ at all points $(x;y,t)\in \mathbb{R}\times\mathbb{R}^{2}$, iff the function $u\colon \mathbb{R}\times\mathbb{R}^{2}\rightarrow\mathbb{R}$ satisfies the ABC-equation~\eqref{T1}.
\end{Proposition}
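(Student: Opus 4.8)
The plan is to read the two relations in~\eqref{T3} as prescriptions for the derivatives $w_t$ and $w_y$ of the fibre variable and to translate compatibility into the commutativity of the two naturally associated vector fields. Abbreviating $\kappa:=k_1+k_2-1$, I would write the covering as the pair of fields
\begin{gather*}
\widehat{V}_t=\partial_t+\frac{u_t w}{u_x k_1\kappa}\,\partial_x+\frac{w(w+\kappa)u_{tx}}{u_x k_1}\,\partial_w,\\
\widehat{V}_y=\partial_y+\frac{u_y w}{u_x k_1(w+\kappa)}\,\partial_x+\frac{w\kappa u_{yx}}{u_x k_1}\,\partial_w
\end{gather*}
on the space with coordinates $(x;y,t;w)$, chosen so that a smooth $w\colon\mathbb{R}\times\mathbb{R}^2\to\mathbb{R}$ solves~\eqref{T3} precisely when its graph is an integral surface of both $\widehat{V}_t$ and $\widehat{V}_y$. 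In this picture the covering is compatible exactly when the distribution spanned by $\widehat{V}_t$ and $\widehat{V}_y$ is Frobenius-integrable.

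Because $\widehat{V}_t$ and $\widehat{V}_y$ are the only fields carrying the derivations $\partial_t$ and $\partial_y$ with unit coefficients, their commutator $[\widehat{V}_t,\widehat{V}_y]$ automatically has vanishing $\partial_t$- and $\partial_y$-components, so integrability is equivalent to the sharper identity $[\widehat{V}_t,\widehat{V}_y]=0$. This single identity decomposes into two scalar equations, namely the vanishing of the $\partial_x$-component and of the $\partial_w$-component of the bracket. I would compute both by the chain rule, carefully distinguishing the explicit dependence of the coefficients on $(x,y,t)$ through the first and second derivatives of $u$ from their dependence on the fibre variable $w$.

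The first simplification to verify is that the genuinely dangerous terms drop out: in the $\partial_x$-component the second-order contributions proportional to $u_{xx}$ cancel between the two advection terms, leaving an expression built only from $u_t,u_y,u_{tx},u_{yx}$ and from $u_{ty},u_{xy}$. After this cancellation each component is a rational function of $w$ whose only denominator is a power of $w+\kappa$; clearing it turns the condition into a polynomial in $w$. Since compatibility is demanded for \emph{every} value of the fibre variable, each coefficient of this polynomial must vanish separately, which produces a finite system of differential relations on $u$. The core of the argument is to show that this system is equivalent to the single ABC relation~\eqref{T1}, the parameter identity $k_1-\kappa=1-k_2$ being exactly what is needed to recombine the individual coefficients into the expression $u_t u_{xy}-k_1 u_x u_{ty}-k_2 u_y u_{tx}$.

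I expect the reduction in the previous step to be the main obstacle. The bookkeeping of the $w+\kappa$ denominators is delicate, and the $\partial_w$-component in addition brings in the third-order derivatives $u_{xxy},u_{xxt},u_{xyt}$, which do not cancel termwise but must be seen to organise into the total $x$-derivative of~\eqref{T1}; thus they vanish on solutions of the ABC equation together with its prolongation. Once every collected coefficient has been matched against~\eqref{T1}, the forward implication follows, and the converse is immediate: substituting a solution $u$ of~\eqref{T1} makes all coefficients vanish, so $[\widehat{V}_t,\widehat{V}_y]=0$ and the covering~\eqref{T3} is compatible.
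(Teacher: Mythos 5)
Your framework is sound, and for what it is worth there is nothing in the paper to compare it against: the paper does not prove this proposition at all, but attributes it to \cite{BuFeTs} and \cite{KrSeMo}, so the direct verification you outline is the appropriate route. Your structural claims also check out. For a quasilinear pair $w_t+aw_x=b$, $w_y+cw_x=d$ the cross-derivative $w_{ty}-w_{yt}$, computed modulo the system, has identically vanishing $w_{xx}$- and $w_x^2$-coefficients, and the two surviving coefficients are exactly the $\partial_x$- and $\partial_w$-components of $[\widehat{V}_t,\widehat{V}_y]$, so compatibility is indeed equivalent to the vanishing of the bracket, coefficient by coefficient in the fibre variable; and the $u_{xx}$ contributions do cancel inside $ac_x-ca_x$, as you claim.

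The genuine gap is that you never perform the one computation that carries the entire content of the proposition, and the outcome you assert is not what that computation gives. Writing $\kappa:=k_1+k_2-1$ and taking $a$, $b$, $c$, $d$ exactly as printed in \eqref{T3}, the $\partial_x$-component $c_t+ac_x+bc_w-a_y-ca_x-da_w$ of the bracket, multiplied by the nonvanishing factor $k_1^{2}\kappa(w+\kappa)u_x^{2}/w$, comes out as the polynomial
\begin{gather*}
\big[(2-k_2)\,u_tu_{xy}-k_1u_xu_{ty}-u_yu_{tx}\big]\,w+\kappa(1-k_2)\big(u_tu_{xy}-u_yu_{tx}\big),
\end{gather*}
and neither coefficient is a multiple of the ABC expression. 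Their difference is precisely $u_tu_{xy}-k_1u_xu_{ty}-k_2u_yu_{tx}$, so compatibility still \emph{implies} \eqref{T1}; but jointly the two conditions are strictly stronger than \eqref{T1} whenever $k_2\neq1$ (they additionally force $u_tu_{xy}=u_yu_{tx}$). Hence your closing sentence --- that substituting a solution of \eqref{T1} ``makes all coefficients vanish'', so the converse is ``immediate'' --- is false for the system as printed, and the invoked identity $k_1-\kappa=1-k_2$ does not rescue the recombination. Had you executed your own plan, you would have discovered that \eqref{T3} carries a misprint: the $\partial_x$-coefficient of the $y$-flow needs a factor $k_2$ in the numerator, $c=k_2u_yw/\big(k_1u_x(w+\kappa)\big)$, consistent with the $k_2$ visible in \eqref{T4} and \eqref{T22} and with \cite{KrSeMo}; with that correction the cleared $\partial_x$-component factors as $\big(w+\kappa(1-k_2)\big)\big(u_tu_{xy}-k_1u_xu_{ty}-k_2u_yu_{tx}\big)$ and vanishes identically in $w$ exactly on solutions of \eqref{T1}. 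A complete proof must exhibit this factorization and then run the analogous reduction on the $\partial_w$-component, where the third-order terms have to be shown to reduce modulo \eqref{T1} and its total $x$-derivative; your sketch concedes precisely this step (``I expect the reduction \dots\ to be the main obstacle''), which is where the proposition lives or dies.
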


Moreover, this result was recently generalized in \cite{KrSeMo} to the following ``\textit{linearizing}'' proposition.

\begin{Proposition}\label{Prop_T1}A system $J_{\rm lin}^{1}\big(\mathbb{R}^{2}\times\mathbb{R}^{2};\mathbb{R}\big)|_{\mathcal{E}}$ of linear first-order differential relations
\begin{gather}
\frac{\partial\psi}{\partial t}+\frac{\lambda u_{x}^{\frac{k_{2}-1}{k_{1}}
}u_{t}}{k_{1}(k_{1}+k_{2}-1)}\frac{\partial\psi}{\partial x}-\frac
{\lambda^{2}u_{x}^{\frac{k_{2}-k_{1}-1}{k_{1}}}(u_{t}u_{xx}-k_{1}
u_{xx}u_{xt})}{k_{1}^{2}}\frac{\partial\psi}{\partial\lambda}=0,\nonumber\\
\frac{\partial\psi}{\partial y}+\frac{k_{2}\lambda u_{x}^{\frac{k_{2}-1}
{k_{1}}}u_{y}}{k_{1}\big(\lambda u_{x}^{\frac{k_{2}-k_{1}-1}{k_{1}}}+k_{1}
+k_{2}-1\big)}\frac{\partial\psi}{\partial x}-\frac{\lambda^{2}k_{2}(k_{1}
+k_{2}-1)u_{x}^{\frac{k_{2}-k_{1}-1}{k_{1}}}u_{y}u_{xx}}{k_{1}^{2}\big(\lambda
u_{x}^{\frac{k_{2}-k_{1}-1}{k_{1}}}+k_{1}+k_{2}-1\big)}\frac{\partial\psi
}{\partial\lambda}=0\label{T4}
\end{gather}
on the covering jet manifold $J^{1}\big(\mathbb{R}^{2}\times\mathbb{R}^{2};\mathbb{R}\big)$ is compatible, that is, it holds for any its smooth
solution $\psi\colon \mathbb{R}^{2}\times\mathbb{R}^{2}\rightarrow\mathbb{R}$ at all points $(x,\lambda;y,t)\in\mathbb{R}^{2}\times\mathbb{R}^{2}$, iff the function $u\colon \mathbb{R}\times\mathbb{R}^{2}\rightarrow\mathbb{R}$ satisfies the ABC-equation~\eqref{T1}.
\end{Proposition}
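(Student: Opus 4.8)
The plan is to recast the two linear relations \eqref{T4} as the statement that the unknown $\psi$ is a common first integral of two vector fields on the covering manifold with coordinates $(x,\lambda;y,t)$, namely
\[
X_{t}=\partial_{t}+a\,\partial_{x}+b\,\partial_{\lambda},\qquad X_{y}=\partial_{y}+c\,\partial_{x}+d\,\partial_{\lambda},
\]
where $a,b,c,d$ are the explicit coefficients read off from \eqref{T4}. In these terms the system $J^{1}_{\rm lin}\big(\mathbb{R}^{2}\times\mathbb{R}^{2};\mathbb{R}\big)|_{\mathcal{E}}$ is compatible precisely when the rank-two distribution $\langle X_{t},X_{y}\rangle$ is involutive in the Frobenius sense, i.e.\ $[X_{t},X_{y}]\in\langle X_{t},X_{y}\rangle$.

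The first observation I would make is structural and removes half the work: since the $\partial_{t}$- and $\partial_{y}$-components of $X_{t}$ and $X_{y}$ are the constants $1$ and $0$, the commutator carries no $\partial_{t}$- or $\partial_{y}$-component and reduces to
\[
[X_{t},X_{y}]=\big(X_{t}(c)-X_{y}(a)\big)\partial_{x}+\big(X_{t}(d)-X_{y}(b)\big)\partial_{\lambda}.
\]
Consequently any nonzero element of $\langle X_{t},X_{y}\rangle$ necessarily has a nontrivial $\partial_{t}$- or $\partial_{y}$-part, so involutivity forces the sharper condition $[X_{t},X_{y}]=0$, equivalently the two scalar identities $X_{t}(c)=X_{y}(a)$ and $X_{t}(d)=X_{y}(b)$.

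Next I would compute these two brackets explicitly. The essential point here, and the source of the ABC equation, is that the coordinate $x$ plays a double role: it is a fibre coordinate of the covering and simultaneously the spatial argument of the fixed background solution $u=u(x;y,t)$. Hence $\partial_{x}$ acts on the jet variables through the chain rule, $\partial_{x}u_{x}=u_{xx}$, $\partial_{x}u_{t}=u_{tx}$, $\partial_{x}u_{y}=u_{yx}$, and so on, whereas $\partial_{\lambda}$ sees only the explicit $\lambda$-dependence. Substituting the fractional-power coefficients from \eqref{T4} and differentiating accordingly, I would expand each bracket as a rational expression in $\lambda$ whose denominator is a power of $\big(\lambda u_{x}^{(k_{2}-k_{1}-1)/k_{1}}+k_{1}+k_{2}-1\big)$, and collect the numerators by powers of $\lambda$.

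The main obstacle is exactly this algebraic simplification: one must keep careful track of the fractional exponents of $u_{x}$, the $\lambda$-dependent denominator coming from the $y$-equation, and the cross terms produced by the chain rule. The content of the proposition is that, after clearing the nonvanishing factors (a power of $u_{x}$, a power of $\lambda$, and the denominator), the numerator of each component factors through the single combination $u_{t}u_{xy}-k_{1}u_{x}u_{ty}-k_{2}u_{y}u_{tx}$. Since the coefficient of the distinguished power of $\lambda$ isolates exactly this combination and it must vanish identically in $\lambda$, both scalar identities are equivalent to the ABC relation \eqref{T1}; the forward implication follows by direct substitution, and the converse by reading off that isolating coefficient. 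Alternatively, one could deduce the result from the preceding Proposition by the standard correspondence that linearizes the quasi-linear covering \eqref{T3} via the method of characteristics, treating $\lambda$ as the induced spectral parameter, but the self-contained commutator computation above is the more direct route.
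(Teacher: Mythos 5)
Your proposal is correct in outline but takes a genuinely different route from the paper. You attack Proposition~\ref{Prop_T1} head-on: writing \eqref{T4} as $X_{t}\psi=0=X_{y}\psi$, using the structural fact that $[X_{t},X_{y}]$ has no $\partial_{t}$- or $\partial_{y}$-component to sharpen Frobenius involutivity to the strict commutation $[X_{t},X_{y}]=0$, and reducing everything to the claim that the two bracket components, viewed as rational functions of $\lambda$ with denominator a power of $\big(\lambda u_{x}^{(k_{2}-k_{1}-1)/k_{1}}+k_{1}+k_{2}-1\big)$, vanish identically iff $u_{t}u_{xy}-k_{1}u_{x}u_{ty}-k_{2}u_{y}u_{tx}=0$. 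The paper never performs this computation. Its proof runs through the inverse implication \eqref{T10} of the linearization scheme of Section~\ref{Sec_T2}: it starts from the nonlinear covering \eqref{T3}, whose compatibility iff \eqref{T1} is the preceding Proposition quoted from \cite{BuFeTs,KrSeMo}, observes that the associated characteristic flows $\partial x/\partial t$, $\partial x/\partial y$ are \emph{not} compatible, removes the strictly linear part by the hidden isomorphism $w=u_{x}^{\alpha}\tilde{w}$ with $\alpha=(k_{1}+k_{2}-1)/k_{1}$ to get the system \eqref{T22} whose characteristic flows are compatible, and then invokes the general principle that solutions $\tilde{w}$ are level sets of first integrals $\psi=\alpha(\tilde{w}-\lambda)$ of the associated linear vector fields, yielding \eqref{T23} and hence \eqref{T4}; the ``iff'' is thus inherited from the earlier Proposition rather than recomputed. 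Each route buys something: yours is self-contained (no appeal to \eqref{T3}, to the cited compatibility results, or to the covering formalism), proves both directions at once, and, once the algebra is actually written out, would double as an independent check of the coefficients --- which has real value here, since \eqref{T23} and \eqref{T4} do not literally coincide (the powers of $u_{x}$ and the factors of $k_{1}$ differ, apparently through misprints), while the paper's derivation explains \emph{where} \eqref{T4} comes from, which is the point of the article. Two cautions for your write-up: the heavy computation is only asserted, and for the ``if'' direction you must verify that \emph{every} $\lambda$-coefficient of both bracket components, not just the distinguished one, is a multiple of the ABC combination (this is the crux, not a formality); and the nondegeneracy assumptions $u_{x}\neq0$, $k_{1}\neq0$, $k_{1}+k_{2}-1\neq0$ implicit in the fractional powers and denominators should be stated explicitly.
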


A similar result, when $n=1$, $m,k=2$, we have stated for the Manakov--Santini equations%
\begin{gather}
u_{tx}+u_{yy}+(uu_{x})_{x}+v_{x}u_{xy}-v_{y}u_{xx} =0, \nonumber\\
v_{xt}+v_{yy}+uv_{xx}+v_{x}u_{xy}-v_{y}v_{xx} =0,\label{T4aa}
\end{gather}
whose Lax--Sato integrability was extensively studied earlier in \cite{BoPa, DuFeKr,FeKr,HePrBlPr,MaSa,OdSo}. The system~(\ref{T4aa}) as a
jet submanifold $J^{1}\big(\mathbb{R}\times\mathbb{R}^{2};\mathbb{R}\big)|_{\mathcal{E}}\subset J^{1}(\mathbb{R}\times\mathbb{R}^{2};\mathbb{R})$ allows the following nonlinear first-order differential Lax-type representation
\begin{gather}
\frac{\partial w}{\partial t}+\big(w^{2}-wv_{x}+u-v_{y}\big)\frac{\partial w}{\partial x}+u_{x}w-u_{y}+v_{yy}+v_{x}(v_{y}-u)_{x}=0,\nonumber \\
\frac{\partial w}{\partial y}+w\frac{\partial w}{\partial x}-v_{xx}w+(u-v_{y})_{x}=0,\label{T4b}
\end{gather}
compatible on solutions to the nonlinear differential relation $\mathcal{E}[x,\tau;u]=0$ (\ref{T4aa}) on $J^{2}\big(\mathbb{R}\times \mathbb{R}^{2};\mathbb{R}^{2}\big)$. The existence of the compatible representation~(\ref{T4b}) makes it possible to state the following proposition.

\begin{Proposition}\label{Prop_T2}A covering system $J_{\rm lin}^{1}\big(\mathbb{R}^{2}\times \mathbb{R}^{2};\mathbb{R}\big)|_{\mathcal{E}}$ of linear first-order differential relations
\begin{gather*}
\frac{\partial\psi}{\partial t}+\big(\lambda^{2}+\lambda v_{x}+u-v_{y}\big)\frac{\partial\psi}{\partial x}+(u_{y}-\lambda u_{x})\frac{\partial\psi
}{\partial\lambda}=0,\nonumber\\
\frac{\partial\psi}{\partial y}+(v_{x}+\lambda)\frac{\partial\psi}{\partial x}-u_{x}\frac{\partial\psi}{\partial\lambda}=0
\end{gather*}
on the jet manifold $J^{1}\big(\mathbb{R}^{2}\times\mathbb{R}^{2};\mathbb{R}\big)$ is compatible, that is, it holds for any its smooth solution $\psi\colon \mathbb{R}^{2}\times\mathbb{R}^{2}\rightarrow\mathbb{R}$ at all points $(x,\lambda;y,t)\in \mathbb{R}^{2}\times\mathbb{R}^{2}$, iff the function $u\colon \mathbb{R}\times\mathbb{R}^{2}\rightarrow\mathbb{R}$ satisfies the Manakov--Santini equation~\eqref{T4aa}.
\end{Proposition}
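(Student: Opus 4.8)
The plan is to read the two displayed relations as the condition that a function $\psi$ be a common first integral of two vector fields on the covering manifold $\mathbb{R}^{2}\times\mathbb{R}^{2}$, and to reduce the claimed compatibility to the vanishing of a single Lie bracket. Introducing
\begin{gather*}
X_{t}=\frac{\partial}{\partial t}+\big(\lambda^{2}+\lambda v_{x}+u-v_{y}\big)\frac{\partial}{\partial x}+(u_{y}-\lambda u_{x})\frac{\partial}{\partial\lambda},\\
X_{y}=\frac{\partial}{\partial y}+(v_{x}+\lambda)\frac{\partial}{\partial x}-u_{x}\frac{\partial}{\partial\lambda},
\end{gather*}
the system becomes $X_{t}\psi=0$, $X_{y}\psi=0$. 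First I would note that both fields are in evolutionary form, $X_{t}=\partial_{t}+\cdots$ and $X_{y}=\partial_{y}+\cdots$, with coefficients free of $\partial_{t}$ and $\partial_{y}$; consequently the bracket $[X_{t},X_{y}]$ has vanishing $\partial_{t}$- and $\partial_{y}$-components. Since $X_{t}$ and $X_{y}$ are the only fields in the picture carrying those components, the Frobenius requirement that $[X_{t},X_{y}]$ lie in the span of $X_{t}$ and $X_{y}$ collapses to the sharper identity $[X_{t},X_{y}]=0$, and the entire proposition reduces to one bracket computation together with the logical equivalence it encodes.

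Writing the $\partial_{x}$- and $\partial_{\lambda}$-coefficients of $X_{t}$ as $A,B$ and those of $X_{y}$ as $C,D$, the second step is the explicit evaluation
\begin{gather*}
[X_{t},X_{y}]=\big(X_{t}(C)-X_{y}(A)\big)\frac{\partial}{\partial x}+\big(X_{t}(D)-X_{y}(B)\big)\frac{\partial}{\partial\lambda}.
\end{gather*}
Each scalar coefficient is, a priori, a quadratic polynomial in the spectral parameter $\lambda$, so its vanishing threatens to impose three separate conditions, one for each power $\lambda^{2},\lambda^{1},\lambda^{0}$. The technical crux of the argument---and the step I expect to demand the most care---is to show that the $\lambda^{2}$ and $\lambda^{1}$ contributions cancel identically in both coefficients, so that each one collapses to a single $\lambda$-independent expression. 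This cancellation is precisely the structural feature that encodes integrability; it is forced by the exact shape of the coefficients of $X_{t}$ and $X_{y}$ and is not an accident of the low-order terms.

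Once the $\lambda$-dependence has been cleared, I expect the surviving pieces to be, after collecting terms, $X_{t}(C)-X_{y}(A)=v_{xt}+v_{yy}+uv_{xx}+v_{x}v_{xy}-v_{y}v_{xx}$ and $X_{t}(D)-X_{y}(B)=-\big(u_{tx}+u_{yy}+(uu_{x})_{x}+v_{x}u_{xy}-v_{y}u_{xx}\big)$, that is, exactly the two left-hand sides of the Manakov--Santini system~\eqref{T4aa}, where one uses $(uu_{x})_{x}=u_{x}^{2}+uu_{xx}$ to recognize the nonlinear term; note in particular the ``crossing'' whereby the $\partial_{x}$-coefficient reproduces the $v$-equation and the $\partial_{\lambda}$-coefficient the $u$-equation. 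The final step is then purely logical and delivers both implications simultaneously: the distribution spanned by $X_{t}$ and $X_{y}$ is integrable for an arbitrary value of $\lambda\in\mathbb{R}$ if and only if both of these $\lambda$-free coefficients vanish, which is to say if and only if $u$ and $v$ satisfy~\eqref{T4aa}. Thus the only genuine labour is the bookkeeping behind the $\lambda$-cancellation; the identification of the remnants with~\eqref{T4aa} and the equivalence statement follow at once.
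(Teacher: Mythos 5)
Your proposal is correct, and I checked the computation: writing $A=\lambda^{2}+\lambda v_{x}+u-v_{y}$, $B=u_{y}-\lambda u_{x}$, $C=v_{x}+\lambda$, $D=-u_{x}$, one indeed finds $X_{t}(C)-X_{y}(A)=v_{xt}+v_{yy}+uv_{xx}+v_{x}v_{xy}-v_{y}v_{xx}$ and $X_{t}(D)-X_{y}(B)=-\big(u_{tx}+u_{yy}+(uu_{x})_{x}+v_{x}u_{xy}-v_{y}u_{xx}\big)$, with all $\lambda^{2}$- and $\lambda$-terms cancelling exactly as you predicted, and your collapse of the Frobenius condition to $[X_{t},X_{y}]=0$ (because the bracket has no $\partial_{t}$- or $\partial_{y}$-component while every nontrivial combination of $X_{t}$, $X_{y}$ does) is sound. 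But your route is genuinely different from the paper's. The paper never brackets the linear vector fields: it starts from the \emph{nonlinear} first-order covering~\eqref{T25} of the Manakov--Santini system, observes that the associated characteristic flows $\partial x/\partial t=w^{2}-wv_{x}+u-v_{y}$, $\partial x/\partial y=w$ are \emph{not} compatible modulo~\eqref{T25}, finds the hidden isomorphism $w=\tilde{w}+v_{x}$ yielding the equivalent covering~\eqref{T29} whose characteristic flows are compatible, and then invokes the general inverse-implication scheme~\eqref{T10} of Section~\ref{Sec_T2}: any solution $\tilde{w}$ of~\eqref{T29} generates a first integral $\psi=\alpha(\tilde{w}-\lambda)$ of the linear system, which is thereby identified as the Lax--Sato linearization, compatible precisely on solutions of the Manakov--Santini relation. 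The paper's argument is constructive---it explains where the $\lambda$-dependent linear system comes from, which is the very point of the linearization covering technique---but it leans on the general first-integral machinery for the equivalence; your direct Frobenius computation is more elementary and self-contained and establishes the ``iff'' in one stroke, at the price of presupposing the answer rather than deriving it. A small bonus of your calculation: the surviving $\partial_{x}$-coefficient contains $v_{x}v_{xy}$, confirming the form~\eqref{T24} of the system and exposing the term $v_{x}u_{xy}$ in the second equation of~\eqref{T4aa}, to which the proposition formally refers, as a misprint.
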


From the point of view, based on the propositions above, the substance of our present analysis, similarly to that suggested in \cite{Kras}, is to
recover the intrinsic mathematical structure responsible for the existence of the ``\textit{linearizing}'' covering jet manifold mappings
\begin{gather}
J_{\rm lin}^{1}\big(\mathbb{R}^{n+1}\times\mathbb{R}^{2};\mathbb{R}\big)|_{\mathcal{E}}
 \rightleftharpoons J^{1}\big(\mathbb{R}^{n}\times\mathbb{R}^{2};\mathbb{R}\big)|_{\mathcal{E}} \label{T4a}
\end{gather}
for any dimension $n\in\mathbb{Z}_{+}$, compatible with our differential relation $\mathcal{E}[x,\tau;u]=0$, as it was presented above in the
form~(\ref{T3}) and~(\ref{T4}) for the differential relation (\ref{T1}). Thus, for a given nonlinear differential relation $\mathcal{E}[x,\tau;u]=0$ on the jet manifold $J^{k}\big(\mathbb{R}^{n}\times\mathbb{R}^{2};\mathbb{R}^{m}\big)$ for some $k\in\mathbb{Z}_{+}$ one can pose the
following \textit{problem}:

Suppose there holds a compatible system $J^{1}\big(\mathbb{R}^{n}\times\mathbb{R}^{2};\mathbb{R}^{m}\big)|_{\mathcal{E}}\subset J^{1}\big(\mathbb{R}^{n}\times \mathbb{R}^{2};\mathbb{R}^{m}\big)$ of quasi-linear first-order differential relations. How then can one construct a linearizing first-order differential system $J_{\rm lin}^{1}\big(\mathbb{R}^{(1+n)+2};\mathbb{R}\big)|_{\mathcal{E}}\subset J^{1}\big(\mathbb{R}^{(1+n)+2};\mathbb{R}\big)$ in the form of equations involving vector fields on the covering space $\mathbb{R}^{n+1}\times\mathbb{R}^{2} $, implementing the implications~(\ref{T4a})?

The latter is interpreted as the associated Lax--Sato representation \cite{ZaSh}, recently developed in \cite{Bogd,BoDrMa,BoKo,FeKr,FeKrNo,TaTa-1,TaTa-2}, for the given differential relation $\mathcal{E}[x,\tau;u]=0$ on the jet manifold $J^{k}\big(\mathbb{R}^{n}\times\mathbb{R}^{2};\mathbb{R}^{m}\big)$.

As a dual approach to this linearization covering scheme we present also the so-called contact geometry linearization, proposed recently in~\cite{Serg-1} and generalizing from three to four dimensions the well-known, see, e.g.,~\cite{SzBl} and reference therein, Hamiltonian linearization covering method. As an example, we state the following proposition.

\begin{Proposition}
The following {\rm \cite{BuFeTs}} nonlinear singular manifold Toda differential relation
\begin{gather}
u_{xy}\sinh^{2}u_{t}=u_{x}u_{y}u_{tt} \label{T4ab}
\end{gather}
on the jet manifold $J^{2}\big(\mathbb{R}^{2}\times\mathbb{R}^{2};\mathbb{R}\big)$ allows the Lax--Sato type linearization covering
\begin{gather*}
\frac{\partial\psi}{\partial t}+\frac{\big(e^{-2u_{t}}-1\big)}{2u_{x}}\frac
{\partial\psi}{\partial x}-\left[ \lambda\left( \frac{e^{-2u_{t}}-1}{2u_{x}
}\right) _{x}+\lambda^{2}\left( \frac{e^{-2u_{t}}-1}{2u_{x}}\right)
_{z}\right] \frac{\partial\psi}{\partial\lambda} =0,\nonumber\\
\frac{\partial\psi}{\partial y}-\frac{u_{y}e^{-2u_{t}}}{u_{x}}\frac
{\partial\psi}{\partial x}+\left[ \lambda\left( \frac{u_{y}e^{-2u_{t}}
}{u_{x}}\right) _{x}+\lambda^{2}\left( \frac{u_{y}e^{-2u_{t}}}{u_{x}
}\right) _{z}\right] \frac{\partial\psi}{\partial\lambda} =0
\end{gather*}
for smooth invariant functions $\psi\in C^{2}\big(\mathbb{R}^{3}\times \mathbb{R}^{2};\mathbb{R}\big)$, all $(x,z,\lambda;\tau)\in\mathbb{R}^{3}
\times\mathbb{R}^{2}$ and any smooth solution $u\colon \mathbb{R}^{2}\times\mathbb{R}^{2}\mathbb{\rightarrow R}$ to the differential relation~\eqref{T4ab}.
\end{Proposition}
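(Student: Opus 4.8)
The plan is to read the two displayed relations as the equations $X_t\psi=0$, $X_y\psi=0$ for the first-order vector fields on $\mathbb{R}^3\times\mathbb{R}^2$ with coordinates $(x,z,\lambda;y,t)$,
\[
X_t=\partial_t+a\,\partial_x-\big(\lambda a_x+\lambda^2 a_z\big)\partial_\lambda,\qquad X_y=\partial_y+b\,\partial_x-\big(\lambda b_x+\lambda^2 b_z\big)\partial_\lambda,
\]
where $a:=\big(e^{-2u_t}-1\big)/(2u_x)$ and $b:=-u_y e^{-2u_t}/u_x$ depend on $(x,z;y,t)$ but not on $\lambda$. Since $X_t$ and $X_y$ are pointwise linearly independent (they alone carry the $\partial_t$ and $\partial_y$ directions), the Frobenius criterion tells us that this overdetermined system is compatible if and only if the plane field they span is involutive, that is, $[X_t,X_y]\in\mathrm{span}\{X_t,X_y\}$.

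First I would compute the commutator directly. Both fields share the special prolongation shape $\partial_s+f\,\partial_x-\big(\lambda f_x+\lambda^2 f_z\big)\partial_\lambda$ with $f$ independent of $\lambda$, and the decisive structural point is that this shape is preserved under bracketing:
\[
[X_t,X_y]=\Phi\,\partial_x-\big(\lambda\Phi_x+\lambda^2\Phi_z\big)\partial_\lambda,\qquad \Phi:=b_t-a_y+a\,b_x-b\,a_x.
\]
What must be verified here is that the cubic-in-$\lambda$ terms, coming from the product of the two $\lambda^2$-prolongations, cancel identically, while the surviving $\lambda$- and $\lambda^2$-parts reassemble exactly into $-\lambda\Phi_x$ and $-\lambda^2\Phi_z$; this cancellation is precisely the four-dimensional contact mechanism that makes the covering close. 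Because $[X_t,X_y]$ has no $\partial_t$ or $\partial_y$ component, it can lie in $\mathrm{span}\{X_t,X_y\}$ only by vanishing, so compatibility is equivalent to the single scalar condition $\Phi=0$.

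It then remains to identify $\Phi=0$ with the Toda relation~\eqref{T4ab}. I would substitute the explicit $a,b$, abbreviate $p:=e^{-2u_t}$ (so that $p_x=-2u_{tx}p$, $p_y=-2u_{ty}p$, $p_t=-2u_{tt}p$), bring every term over the common denominator $2u_x^3$, and collect. Most contributions cancel in pairs: the $u_{ty}u_x^2$ terms of $b_t$ and $-a_y$, the $u_y u_{tx}u_x$ terms of $b_t$, $a b_x$ and $-b a_x$, and the $u_y u_{xx}$ terms of $a b_x$ and $-b a_x$ all drop out, leaving
\[
\Phi=\frac{4p\,u_x u_y u_{tt}-(p-1)^2 u_{xy}}{2u_x^2}.
\]
The final ingredient is the hyperbolic identity $(p-1)^2=\big(e^{-2u_t}-1\big)^2=4e^{-2u_t}\sinh^2 u_t=4p\sinh^2 u_t$, which recasts this as $\Phi=\frac{2e^{-2u_t}}{u_x^2}\big(u_x u_y u_{tt}-u_{xy}\sinh^2 u_t\big)$. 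As $e^{-2u_t}\neq0$ and $u_x\neq0$ on the relevant domain, $\Phi=0$ is exactly equation~\eqref{T4ab}, which yields the proposition (in fact as an equivalence).

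The step I expect to be the main obstacle is the bookkeeping of this last reduction: $\Phi$ expands into a dozen-odd terms mixing powers of $p$ with second derivatives of $u$, and one must both track the cancellations and recognize the $\sinh^2 u_t$ concealed inside $(p-1)^2$. The commutator computation, though also lengthy, is conceptually routine once the prolongation structure is exploited systematically.
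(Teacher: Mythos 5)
Your proposal is correct, and I verified its two computational claims independently: for fields of the prolonged shape $\partial_s+f\,\partial_x-\big(\lambda f_x+\lambda^2 f_z\big)\partial_\lambda$ the bracket does close in the same shape, with the cubic terms $2\lambda^3a_zb_z$ cancelling and the $\lambda$- and $\lambda^2$-coefficients reassembling into $-\Phi_x$ and $-\Phi_z$ for $\Phi=b_t-a_y+ab_x-ba_x$; and with $p=e^{-2u_t}$ one indeed gets $\Phi=\big(4pu_xu_yu_{tt}-(p-1)^2u_{xy}\big)/\big(2u_x^2\big)=\frac{2e^{-2u_t}}{u_x^2}\big(u_xu_yu_{tt}-u_{xy}\sinh^2u_t\big)$, so $\Phi=0$ is exactly~\eqref{T4ab}. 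However, your route is genuinely different from the paper's. The paper never computes the commutator: it obtains the two vector fields as the extended \emph{contact} lifts $\tilde X_{H^{(t)}}$, $\tilde X_{H^{(y)}}$ of the contact Hamiltonians $H^{(t)}=\lambda\big(e^{-2u_t}-1\big)/(2u_x)$ and $H^{(y)}=-\lambda u_ye^{-2u_t}/u_x$ via the general formula~\eqref{Th4} (with $u$ allowed to depend on the extra variable $z$); since these Hamiltonians are linear in $\lambda$, the $\partial_z$-component $-H+\lambda H_\lambda$ vanishes and the quadratic-in-$\lambda$ prolongation appears automatically. Commutativity of the lifted fields is then equivalent, by the contact-geometric scheme of Sergyeyev that the paper adopts, to compatibility of the Hamilton--Jacobi relations~\eqref{Th7}, whose compatibility condition is stated to be the Toda relation~\eqref{Th8}. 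The paper's approach thus explains \emph{where} the otherwise ad hoc $\lambda\,(\cdot)_x+\lambda^2(\cdot)_z$ structure comes from and yields a reusable scheme for other 4D dispersionless systems; your approach buys a self-contained, elementary proof of the full equivalence, and in effect reproves for this class of Hamiltonians the bracket-closure of contact fields that the paper invokes abstractly (your identity $\Phi=0$ is precisely the cross-differentiation compatibility condition of~\eqref{Th7}, which the paper cites rather than expands). One small point worth making explicit in your write-up: the coefficients $a$, $b$ genuinely depend on $z$ through $u$ in the paper's setting, so the $f_z$-terms are nontrivial, and your reduction of $\Phi$ holds pointwise in $z$ since no $z$-derivatives of $u$ enter $\Phi$.
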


\section{The linearization covering scheme}\label{Sec_T2}

Implementation of the scheme (\ref{T4a}) is based on the notion of invariants of suitably specified vector fields on the extended base space $\mathbb{R}^{n+1}\times\mathbb{R}^{2}$, whose definition suitable for our needs is as follows: a smooth mapping $\psi\colon \mathbb{R}^{n+1}\times \mathbb{R}^{2}\rightarrow\mathbb{R}$ is an invariant of a set of vector fields
\begin{gather*}
X^{(k)}:=\frac{\partial}{\partial\tau_{k}}+\sum_{j=1}^n a_{j}^{(k)}(x,\lambda;\tau)\frac{\partial}{\partial x_{j}}
+b^{(k)}(x,\lambda ;\tau)\frac{\partial}{\partial\lambda} 
\end{gather*}
on $\mathbb{R}^{n+1}\times\mathbb{R}^{2}$ involving parameters $\tau \in\mathbb{R}^{2} $ with smooth coefficients $\big(a^{(k)},b^{(k)}\big)\colon \mathbb{R}^{n+1}\times\mathbb{R}^{2}\rightarrow \mathbb{E}^{n}\times\mathbb{R}$, $k=1,2$, if the conditions
\begin{gather}
X^{(k)}\psi=0 \label{T6}
\end{gather}
hold for $k=1,2$ and all $(x,\lambda;\tau)\in\mathbb{R}^{n+1}\times\mathbb{R}^{2}$. The system of linear equations (\ref{T6}) is equivalently representable as a jet submanifold $J_{\rm lin}^{1}\big(\mathbb{R}^{n+1}\times\mathbb{R}^{2};\mathbb{R}\big)|_{\mathcal{E}}\subset J^{1}\big(\mathbb{R}^{n+1}\times\mathbb{R}^{2};\mathbb{R}\big)$. It is also well known~\cite{Cart} that simultaneously the following vector field flows
\begin{gather}
\frac{\partial x_{j}}{\partial\tau_{k}} =a_{j}^{(k)}(x,\lambda;\tau),\qquad \frac{\partial\lambda}{\partial\tau_{k}} =b^{(k)}(x,\lambda;\tau)
\label{T7}
\end{gather}
are compatible for any $j=1,\dots,n$, $k=1,2$ and all $(x,\lambda;\tau)\in\mathbb{R}^{n+1}\times\mathbb{R}^{2}$ too. Taking now into account that there is such an invariant function $\psi\colon \mathbb{R}^{n+1}\times\mathbb{R}^{2}\rightarrow\mathbb{R}$, which can be represented as $\psi(x,\lambda;\tau)=w(x;\tau)-\lambda:=0$ for some smooth mapping $w\colon \mathbb{R}^{n}\times\mathbb{R}^{2}\rightarrow\mathbb{R}$, it gives upon its substitution into~(\ref{T6}) the following \textit{a priori} compatible reduced system of quasi-linear first-order
differential relations
\begin{gather}
\frac{\partial w}{\partial\tau_{k}}+\sum_{j=1}^n
a_{j}^{(k)}(x,w;\tau)\frac{\partial w}{\partial x_{j}}-b^{(k)}(x,w;\tau)=0
\label{T8}
\end{gather}
for $k=1,2$ on the jet manifold $J^{0}\big(\mathbb{R}^{n}\times \mathbb{R}^{2};\mathbb{R}\big)$. Moreover, subject to the system~(\ref{T8})
one can also observe~\cite{Cart,Godb} that, modulo solutions to the vector field flow evolution equations~(\ref{T7}), the expression $w(x;\tau)= \psi(x,\lambda(\tau);\tau)+\lambda(\tau)$ for all $(x;\tau)\in\mathbb{R}^{n}\times\mathbb{R}^{2}$, where $\psi\colon \mathbb{R}^{n+1} \times\mathbb{R}^{2}\rightarrow\mathbb{R}$ is a first integral of the vector field flows (\ref{T7}). Thus, the reduction scheme just described above provides the algorithm
\begin{gather*}
J_{\rm lin}^{1}\big(\mathbb{R}^{n+1}\times\mathbb{R}^{2};\mathbb{R}\big)|_{\mathcal{E}}
\rightarrow J^{1}\big(\mathbb{R}^{n}\times\mathbb{R}^{2};\mathbb{R}\big)|_{\mathcal{E}}
\end{gather*}
from the implications (\ref{T4a}) formulated above. The corresponding \textit{inverse} implication{\samepage
\begin{gather}
J_{\rm lin}^{1}\big(\mathbb{R}^{n+1}\times\mathbb{R}^{2};\mathbb{R}\big)|_{\mathcal{E}}
\leftarrow J^{1}\big(\mathbb{R}^{n}\times\mathbb{R}^{2};\mathbb{R}\big)|_{\mathcal{E}}\label{T10}
\end{gather}
can be algorithmically described as follows.}

Consider now a compatible system $J^{1}\big(\mathbb{R}^{n+2};\mathbb{R}\big)|_{\mathcal{E}}\subset J^{1}\big(\mathbb{R}^{n+2};\mathbb{R}\big)$ of
the first-order nonlinear differential relations
\begin{gather}
\frac{\partial w}{\partial\tau_{k}}+\sum_{j=1}^n a_{j}^{(k)}(x,w;\tau)\frac{\partial w}{\partial x_{j}}-b^{(k)}(x,w;\tau)=0\label{T11}
\end{gather}
with smooth coefficients $\big(a^{(k)},b^{(k)}\big)\colon \mathbb{R}^{n+1}\times \mathbb{R}^{2}\rightarrow\mathbb{E}^{n}\times\mathbb{R}$, $k= 1,2 $. As the first step it is necessary to check \textit{whether the associated system of vector field flows}
\begin{gather}
\frac{\partial x_{j}}{\partial\tau_{k}} =a_{j}^{(k)}(x,w;\tau)\label{T12}
\end{gather}
on $\mathbb{R}^{n+1}$ modulo the flows (\ref{T11}) for all $j=1,\dots,n$ and $k=1,2$ \textit{is compatible too.} If the answer is ``\textit{yes}'', then it just means \cite{Cart} that any solution to (\ref{T11}) as a complex function $w\colon \mathbb{R}^{n}\times\mathbb{R}^{2} \rightarrow\mathbb{R}$ is representable as $w(x;\tau)-\lambda=\alpha (\psi(x,\lambda;\tau))$ for any $\lambda\in\mathbb{R}$ and some smooth mapping $\alpha\colon \mathbb{R}\rightarrow\mathbb{R}$, where the mapping $\psi \colon \mathbb{R}^{n+1}\times\mathbb{R}^{2}\rightarrow\mathbb{R} $ is \textit{a~first integral} of the vector field equations
\begin{gather}
\frac{\partial\psi}{\partial\tau_{k}}+\sum_{j=1}^n a_{j}^{(k)}(x,\lambda;\tau)\frac{\partial\psi}{\partial x_{j}}
+b^{(k)}(x,\lambda;\tau)\frac{\partial\psi}{\partial\lambda}=0 \label{T13}
\end{gather}
on the extended space $\mathbb{R}^{n+1}\times\mathbb{R}$ for all $(x,\lambda;\tau)\in \mathbb{R}^{n+1}\times\mathbb{R}^{2}$. Moreover, the
value $w(x(\tau);\tau)=\lambda\in\mathbb{R}$ for all $\tau\in\mathbb{R}^{2}$ is constant as it follows from the condition $\alpha(\psi(x(\tau),w;
\tau))=0$ for the $\tau\in \mathbb{R}^{2}$. Thus, we derived that the vector field equations~(\ref{T13}) realize the covering linear first-order
differential relations $J_{\rm lin}^{1}\big(\mathbb{R}^{n+1}\times\mathbb{R}^{2}; \mathbb{R}\big)|_{\mathcal{E}}\subset J^{1}\big(\mathbb{R}^{n+1}\times\mathbb{R}^{2};\mathbb{R}\big)$ for $k=1,2$ and all $(x,\lambda;\tau )\in
\mathbb{R}^{n+1}\times\mathbb{R}^{2}$, linearizing the first-order nonlinear differential relations~(\ref{T11}) and interpreting it as the
corresponding Lax--Sato representation.

On the \textit{contrary}, if the associated system of vector field flows~(\ref{T12}) \textit{is not compatible}, it is necessary to recover a hidden
isomorphism transformation
\begin{gather}
J^{1}\big(\mathbb{R}^{n+1}\times\mathbb{R}^{2};\mathbb{R}\big)\ni(x,w;\tau
)\rightarrow(x,\tilde{w};\tau)\in J^{1}\big(\mathbb{R}^{n+1}\times\mathbb{R}^{2};\mathbb{R}\big), \label{T14}
\end{gather}
for which the resulting \textit{a priori} \textit{compatible} first-order nonlinear differential relations
\begin{gather}
\frac{\partial\tilde{w}}{\partial\tau_{k}}+\sum_{j=1}^n \tilde {a}_{j}^{(k)}(x,\tilde{w};\tau)\frac{\partial\tilde{w}}{\partial x_{j}}-\tilde{b}^{(k)}(x,\tilde{w};\tau)=0 \label{T15}
\end{gather}
possess already \textit{a compatible associated system} of the corresponding vector field flows
\begin{gather}
\frac{\partial x_{j}}{\partial\tau_{k}}=\tilde{a}_{j}^{(k)}(x,\tilde{w};\tau) \label{T16}
\end{gather}
on the space $\mathbb{R}^{n}\times\mathbb{R}$, for which any solution $\tilde{w}\colon \mathbb{R}^{n+2}\rightarrow\mathbb{R}$ generates a first integral $\tilde{\psi}\colon \mathbb{R}^{n+1}\times\mathbb{R}^{2}\mathbb{\rightarrow R}$ of an associated compatible system of the linear vector field equations
\begin{gather}
\frac{\partial\tilde{\psi}}{\partial\tau_{k}}+\sum_{j=1}^n \tilde {a}_{j}^{(k)}(x,\lambda;\tau)\frac{\partial\tilde{\psi}}{\partial x_{j}}+\tilde{b}^{(k)}(x,\lambda;\tau)\frac{\partial\tilde{\psi}}{\partial\lambda }=0 \label{T17}
\end{gather}
on the space $\mathbb{R}^{n+1}\times\mathbb{R}^{2}\mathbb{\rightarrow R}$ for $k= 1,2$, where $\tilde{\psi}(x,\lambda;\tau):=\tilde{\alpha }(\tilde{w}(x;\tau)-\lambda)$ for all $(x,\lambda;\tau)\in\mathbb{R}^{n+1}\times\mathbb{R}$ and some smooth mapping $\tilde{\alpha}\colon \mathbb{R}\rightarrow\mathbb{R}$. From this one easily infers, as it was done above, a linearized covering jet submanifold $J_{\rm lin}^{1}\big(\mathbb{R}^{n+1}\times \mathbb{R}^{2};\mathbb{R}\big)|_{\mathcal{E}}\subset J^{1}\big(\mathbb{R}^{n+1}\times\mathbb{R}^{2}; \mathbb{R}\big)$, as a~compatible system of the vector field equations~(\ref{T17}), generated by the nonlinear first-order differential system $J^{1}\big(\mathbb{R}^{n}\times\mathbb{R}^{2}; \mathbb{R}\big)|_{\mathcal{E}}\subset J^{1}\big(\mathbb{R}^{n}\times\mathbb{R}^{2};\mathbb{R}\big)$ on the space $\mathbb{R}^{n}\times\mathbb{R}^{2}$. The latter exactly determines the inverse implication~(\ref{T10}) as applied to general compatible first-order nonlinear differential relations~(\ref{T13}), providing for the nonlinear first-order differential system $J^{1}\big(\mathbb{R}^{n}\times\mathbb{R}^{2};\mathbb{R}\big)|_{\mathcal{E}}\subset J^{1}\big(\mathbb{R}^{n}\times\mathbb{R}^{2};\mathbb{R}\big)$ its corresponding Lax--Sato representation.

\begin{Remark}
The existence of the mapping (\ref{T14}) can be inferred from the following reasoning. Assume that the mapping~(\ref{T14}) exists and is equivalent to the relation
\begin{gather}
w(x;\tau) :=\rho(x,\tilde{w}(x;\tau);\tau) \label{T17a}
\end{gather}
for some smooth function $\rho\colon \mathbb{R}^{n+1}\times\mathbb{R}^{2}\mathbb{\rightarrow R}$ and all $(x;\tau)\in\mathbb{R}^{n}\times \mathbb{R}^{2}$, where, by definition, the corresponding mapping $\tilde {w}\colon \mathbb{R}^{n}\times\mathbb{R}^{2}\rightarrow\mathbb{R}$ satisfies the following system of first-order differential relations:
\begin{gather}
\frac{\partial\tilde{w}}{\partial\tau_{k}}+\sum_{j=1}^n a_{j}%
^{(k)}(x,\rho(x,\tilde{w};\tau);\tau)\frac{\partial\tilde{w}}{\partial x_{j}%
}=\tilde{b}^{(k)}(x,\tilde{w};\tau), \label{T17b}
\end{gather}
compatible for all $\tau_{k}\in\mathbb{R}$, $k= 1,2$, and $x\in\mathbb{R}^{n}$. Here the functions $\tilde{b}^{(k)}\colon \mathbb{R}^{n+1}\times\mathbb{R}^{2}\mathbb{\rightarrow R}$, $k= 1,2 $, defined as
\begin{gather*}
\tilde{b}^{(k)}(x,\tilde{w};\tau):=\left[ b^{(k)}-\sum_{j=1}^n\left( \frac{\partial\rho}{\partial\tau_{k}}+a_{j}^{(k)}\frac{\partial\rho
}{\partial x_{j}}\right) \right] \left. \left( \frac{\partial\rho
}{\partial x_{j}}\right) ^{-1}\right\vert _{w =\rho(x,\tilde{w};\tau)},
\end{gather*}
should depend on the mapping (\ref{T17a}) in such a way that the vector fields
\begin{gather}
\frac{\partial x_{j}}{\partial\tau_{k}}=a_{j}^{(k)}(x,\rho(x,\tilde{w};\tau);\tau):=\tilde{a}_{j}^{(k)}(x,\tilde{w};\tau) \label{T17c}
\end{gather}
be also compatible for all $j=1,\dots,n $ and $k= 1,2 $ modulo the flows~(\ref{T17b}). The latter means that the equation~(\ref{T17b}) can be equivalently represented as a compatible system of the following vector field equations
\begin{gather}
\frac{\partial\tilde{\psi}}{\partial\tau_{k}}+\sum_{j=1}^n \tilde
{a}_{j}^{(k)}(x,\lambda;\tau)\frac{\partial\tilde{\psi}}{\partial x_{j}
}+\tilde{b}^{(k)}(x,\rho(x,\lambda;\tau);\tau)\frac{\partial\tilde{\psi}
}{\partial\lambda}=0 \label{T17d}
\end{gather}
on its first integral $\tilde{\psi}\colon \mathbb{R}^{n+1}\times\mathbb{R}^{2}\mathbb{\rightarrow R}$, where $\tilde{\psi}(x,\lambda;\tau)=\alpha
(\tilde{w}(x;\tau)-\lambda) $ for an arbitrarily chosen smooth mapping $\alpha\colon \mathbb{R\rightarrow R}$, any parameter $\lambda\in \mathbb{R}$ and all $(x;\tau)\in\mathbb{R}^{n}\times\mathbb{R}^{2}$. The resulting system~(\ref{T17d}) means exactly a suitable Lax--Sato type linearization of the compatible quasi-linear first-order differential relations~(\ref{T8}). Regarding the mapping~(\ref{T17a}) and the functions $\tilde{b}^{(k)}\colon \mathbb{R}^{n+1}\times \mathbb{R}^{2}\mathbb{\rightarrow R}$, $k= 1,2 $, that depend on it, one can easily observe that the compatibility condition for the vector fields~(\ref{T17c}) reduces to the a priori compatible first-order quasi-linear differential relations
\begin{gather}
\frac{\partial a_{j}^{(k)}}{\partial\rho}\frac{\partial\rho}{\partial
\tau_{s}}-\frac{\partial a_{j}^{(s)}}{\partial\rho}\frac{\partial\rho
}{\partial\tau_{k}}+\left( \frac{\partial a_{j}^{(k)}}{\partial\rho}
 \tilde{b}^{(s)}-\frac{\partial a_{j}^{(s)}}{\partial\rho}\tilde{b}
^{(k)}\right) \frac{\partial\rho}{\partial\tilde{w}}\nonumber\\
\qquad {}+\sum_{m=1}^n\left( \frac{\partial a_{j}^{(k)}}{\partial\rho}
a_{m}^{(s)}-\frac{\partial a_{j}^{(s)}}{\partial\rho}a_{m}^{(k)}\right)
\frac{\partial\rho}{\partial x_{m}}=0,
\label{T17e}
\end{gather}
where $j=1,\dots,n$ and $k\neq s= 1,2$, and whose solution is exactly the searched-for mapping~(\ref{T17a}). Taking here into account that we have only two functional parameters $b^{(s)}\colon \mathbb{R}^{n+1}\times\mathbb{R}^{2}\mathbb{\rightarrow R}$, $s=1,2$, the system of~$2n$ differential relations~(\ref{T17e}) can be, in general, compatible only for the case $n=1$. For all other cases $n\geq2$ the compatibility condition for~(\ref{T17e}) should be checked separately by direct calculations.
\end{Remark}

\subsection{\label{Subec_H4}\textbf{Example: the Gibbons--Tsarev equation }}

As a first degenerate case of the above scheme (\ref{T10}) we consider a compatible nonlinear first-order system $J^{1}\big(\mathbb{R}^{2};\mathbb{R}\big)|_{\mathcal{E}}$ $\subset J^{1}\big(\mathbb{R}^{2};\mathbb{R}\big)$ for $n=0$, discussed before in~\cite{Kras}:
\begin{gather}
\frac{\partial w}{\partial t}-\frac{1}{z_{y}+z_{t}w-w^{2}} =0,\qquad\frac{\partial w}{\partial y}+\frac{z_{t}-w}{u_{y}+z_{t}w-w^{2}}=0, \label{T18}
\end{gather}
where $(t,y;w)\in\mathbb{R}^{2}\times\mathbb{R}$ and a mapping $u\colon \mathbb{R}^{2}\rightarrow\mathbb{R}$ satisfies the Gibbons--Tsarev equation $\mathcal{E}[y,t;u]=0$ in the form
\begin{gather}
z_{yy}+z_{t}z_{ty}-z_{y}z_{tt}+1=0, \label{T18a}
\end{gather}
first derived in \cite{GiTs-1}. As the nonlinear system (\ref{T18}) is
compatible and the associated system of vector field flows (\ref{T16}) is empty, one easily infers \cite{OdSo} that any solution $w\colon \mathbb{R}^{2}\rightarrow\mathbb{R}$ to (\ref{T18}) generates a first
integral $\psi\colon \mathbb{R}\times\mathbb{R}^{2}\rightarrow\mathbb{R}$ of a~system of the vector field equations
\begin{gather}
\frac{\partial\psi}{\partial t}+\frac{1}{z_{y}+z_{t}\lambda-\lambda^{2}}\frac{\partial\psi}{\partial\lambda} =0,\qquad \frac{\partial\psi}{\partial y}-\frac{z_{t}-\lambda}{z_{y}+z_{t}\lambda-\lambda^{2}}\frac{\partial\psi }{\partial\lambda}=0, \label{T19}
\end{gather}
where, by definition, $\psi(\lambda;y,t):=\alpha(w(t,y)-\lambda)$ for all $(\lambda;t,y)\in\mathbb{R}\times\mathbb{R}^{2}$ and some smooth mapping $\alpha\colon \mathbb{R}\rightarrow\mathbb{R}$. The compatible system (\ref{T19}) considered as the jet submanifold $J_{\rm lin}^{1}\big(\mathbb{R}^{1}\times \mathbb{R}^{2};\mathbb{R}\big)|_{\mathcal{E}}\subset J^{1}\big(\mathbb{R}^{1}\times\mathbb{R}
^{2};\mathbb{R}\big)$ solves the problem of constructing the linearizing implication~(\ref{T10}).

As it was demonstrated in \cite{GiTs-2,Mokh}, the substitution
\begin{gather*}
u:=\frac{1}{2}\left(-z_{t}+\sqrt{z_{t}^{2}+4z_{y}}\right),\qquad v:=\frac{1}{2}\left(-z_{t}-\sqrt{z_{t}^{2}+4z_{y}}\right) 
\end{gather*}
gives rise to the next equivalent dynamical system
\begin{gather}
u_{y}=vu_{t}-(u-v)^{-1},\qquad v_{y}=uv_{t}+(u-v)^{-1} \label{T18c}
\end{gather}
on a functional space $M\subset C^{\infty}\big(\mathbb{R};\mathbb{R}^{2}\big)$ subject to the evolution parameter $y\in\mathbb{R}$ modulo evolution with respect to the joint evolution parameter $t\in\mathbb{R}$. Taking into account the Lax--Sato representation~(\ref{T19}), one easily infers~\cite{OdSo} the corresponding linearizing Lax--Sato representation for the
dynamical system (\ref{T18c}):
\begin{gather}
\frac{\partial\psi}{\partial t}-\frac{1}{(\lambda+u)(\lambda+v)}\frac {
\partial\psi}{\partial\lambda} =0,\qquad \frac{\partial\psi}{\partial y}-\frac{u+v+\lambda}{(\lambda+u)(\lambda+v)}\frac{\partial\psi}{\partial \lambda}=0.\label{T18d}
\end{gather}

The above Lax--Sato representation can be now reanalyzed more deeply within the Lie-algebraic scheme devised recently in \cite{HePrBlPr}. Namely,
we define the complex torus diffeomorphism Lie group $\tilde{G}:=\operatorname{Diff}\big(\mathbb{T}_{\mathbb{C}}^{1}\big)$, holomorphically extended to the interior $ \mathbb{D}_{+}^{1}\subset\mathbb{C}$ and to the exterior $\mathbb{D}_{-}^{1}\subset\mathbb{C}$ regions of the unit disc $\mathbb{D}^{1}\subset\mathbb{C}^{1}$, such that for any $g(\lambda)\in\tilde{G}|_{\mathbb{D}_{-}^{1}}$, $\lambda\in\mathbb{D}_{-}^{1}$, $g(\infty)=1\in \operatorname{Diff}\big(\mathbb{T}^{1}\big)$, and study its specially chosen coadjoint orbits related to the compatible system of linear vector field equations~(\ref{T18d}).

As a first step for solving this problem one needs to consider the corresponding Lie algebra $\mathcal{\tilde{G}}:=\operatorname{Diff}\big(\mathbb{T}_{\mathbb{C}}^{1}\big)$ and its decomposition into the direct sum of subalgebras
\begin{gather*}
\mathcal{\tilde{G}}=\mathcal{\tilde{G}}_{+}\oplus\mathcal{\tilde{G}}_{-} 
\end{gather*}
of Laurent series with positive as $|\lambda|\rightarrow0$ and strongly negative as $\lambda|\rightarrow\infty$ degrees, respectively. Then, owing
to the classical Adler--Kostant--Symes theory, for any element $\tilde{l}\in \mathcal{\tilde{G}}^{\ast}\simeq\Lambda^{1}\big(\mathbb{T}_{\mathbb{C}}^{1}\big)$ the following formally constructed flows
\begin{gather}
{\rm d}\tilde{l}/{\rm d}y=-\operatorname{ad}_{\nabla h^{(y)}(\tilde{l})}^{\ast}\tilde{l},
\qquad {\rm d}\tilde{l}/{\rm d}t=-\operatorname{ad}_{\nabla h^{(t)}(\tilde{l})}^{\ast}\tilde{l} \label{T20d}
\end{gather}
with the evolution parameters $y,t\in\mathbb{R}^{2}$ are always compatible, if $h^{(p_{y})}$ and $h^{(p_{t})}\in I\big(\mathcal{\tilde{G}}^{\ast}\big)$ are arbitrarily chosen functionally independent Casimir functionals on the adjoint space $\mathcal{\tilde{G}}^{\ast}$, and by definition, $\nabla h^{(y)}(\tilde{l}):=\nabla h^{(p_{y})}(\tilde{l})_{-}$, $\nabla h^{(t)}(\tilde {l}):=\nabla h^{(p_{t})}(\tilde{l})_{-}$ are their gradients, suitably projected onto the subalgebra $\mathcal{\tilde{G}}_{-}$. Bearing in mind the above classical result, consider the Casimir functional $h^{(p_{y})}$ on $\mathcal{\tilde{G}}^{\ast}$, whose gradient $\nabla h^{(p_{y})}(\tilde{l}):=\nabla h^{(p_{y})}(l)\partial/\partial\lambda \mathcal{\in \tilde{G}}$ as $|\lambda|\rightarrow\infty$ is taken, for simplicity, in the asymptotic form
\begin{gather*}
\nabla h^{(p_{y})}(\tilde{l})\sim\left( \frac{\lambda+u+v}{(\lambda+u)(\lambda+v)}+\alpha_{0}+\alpha_{1}\lambda\right) \frac{\partial}{\partial\lambda}, 
\end{gather*}
where $\lambda\in\mathbb{T}_{\mathbb{C}}^{1}$, $|\lambda|\rightarrow\infty$, and coefficients $\alpha_{j} \in C^{\infty}\big(\mathbb{R}^{2};\mathbb{R}\big)$, $j=0,1$, are arbitrarily chosen nontrivial functional parameters, giving rise to the gradient projection
\begin{gather}
\nabla h^{(y)}(\tilde{l}):=\nabla h^{(p_{y})}(\tilde{l})|_{-}=\frac {\lambda+u+v}{(\lambda+u)(\lambda+v)}, \label{T22d}
\end{gather}
generating the first flow of (\ref{T20d}). As the differential 1-form $\tilde{l}=l(\lambda;y,t){\rm d}\lambda \in\Lambda^{1}\big(\mathbb{T}_{\mathbb{C}}^{1}\big)\simeq\mathcal{\tilde{G}}$ satisfies, by definition, the condition
\begin{gather}
\operatorname{ad}_{\nabla h^{(p_{y})}(\tilde{l})}^{\ast}\tilde{l}=0, \label{T22dd}
\end{gather}
equivalent to the differential equation
\begin{gather}
\frac{{\rm d}}{{\rm d}\lambda}\big[l(\lambda;y,t)\big(\nabla h^{(p_{y})}(l)\big)^{2}\big]=0,\label{T23d}
\end{gather}
one easily infers from (\ref{T22d}) and (\ref{T23d}) that the coefficient
\begin{gather*}
l(\lambda;y,t)= \big(\nabla h^{(p_{y})}(l)\big)^{-2}= \frac{(\lambda+u)^{2}(\lambda+v)^{2}}{[\lambda+u+v+(\alpha_{0}+\alpha_{1}\lambda)(\lambda+u)(\lambda+v)]^{2}},
\end{gather*}
satisfies the relation $l(\lambda;y,t)\big(\nabla h^{(p_{y})}(l)\big)^{2}=1$ for all $(t,y)\in\mathbb{R}^{2}$.

Now we formulate the following useful observation.

\begin{Lemma} \label{Lm_T1} The set $\mathrm{I}(\mathcal{\tilde{G}}^{\ast})$ of the functionally independent Casimir invariants is one-dimen\-sional.
\end{Lemma}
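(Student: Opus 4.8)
The plan is to read the Casimir condition directly off the coadjoint stationarity relation \eqref{T22dd}, which for the coefficient $l=l(\lambda;y,t)$ of $\tilde{l}=l\,\mathrm{d}\lambda$ has already been reduced in \eqref{T23d} to the first-order relation $\frac{\mathrm{d}}{\mathrm{d}\lambda}\big[l\,(\nabla h)^{2}\big]=0$. Thus for every Casimir functional $h\in\mathrm{I}(\tilde{\mathcal{G}}^{\ast})$ its gradient $\nabla h=\delta h/\delta l$ must obey $l\,(\nabla h)^{2}=c$, where $c$ is independent of the spectral variable $\lambda$ (it may still depend on the point $\tilde{l}$ of the orbit, but not on $\lambda$). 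First I would invert this algebraic relation to get $\nabla h=\sqrt{c}\,l^{-1/2}$ for a branch of the square root holomorphic on $\mathbb{D}_{\pm}^{1}$.

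The next step is to integrate this gradient back to a functional. Since $\delta h/\delta l=\sqrt{c}\,l^{-1/2}$ is, up to the $\lambda$-constant factor $\sqrt{c}$, exactly the variational derivative of $\oint_{\mathbb{T}_{\mathbb{C}}^{1}}\sqrt{l}\,\mathrm{d}\lambda$, I would conclude that every Casimir functional is, modulo a functional reparametrization, generated by the single invariant $h_{0}[\tilde{l}]:=\oint_{\mathbb{T}_{\mathbb{C}}^{1}}\sqrt{l(\lambda)}\,\mathrm{d}\lambda$. This is consistent with the computation displayed just before the lemma, where $l\big(\nabla h^{(p_{y})}(l)\big)^{2}=1$ is constant in $\lambda$; that same normalization produces the invariant one-form $\sqrt{l}\,\mathrm{d}\lambda=\nabla h^{(p_{y})}(l)^{-1}\mathrm{d}\lambda$ whose period is precisely $h_{0}$.

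To finish I would argue one-dimensionality: for any two Casimirs $h_{1},h_{2}$ the reduction above forces both $\nabla h_{1}$ and $\nabla h_{2}$ to be pointwise proportional to $l^{-1/2}$, hence collinear at every point of $\tilde{\mathcal{G}}^{\ast}$, so $h_{1}$ and $h_{2}$ are functionally dependent. Equivalently, in the coadjoint-orbit picture the quadratic differential $l\,(\mathrm{d}\lambda)^{2}$ is reduced by the $\operatorname{Diff}(\mathbb{T}_{\mathbb{C}}^{1})$-action to the single diffeomorphism invariant $\oint\sqrt{l}\,\mathrm{d}\lambda$ of the associated one-form $\sqrt{l}\,\mathrm{d}\lambda$, so the generic orbit has codimension one. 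Either way $\mathrm{I}(\tilde{\mathcal{G}}^{\ast})$ is functionally spanned by $h_{0}$ and is one-dimensional.

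The main obstacle I anticipate is not the ODE reduction but the two global/analytic points needed to make ``one-dimensional'' rigorous. First, I must justify that pointwise collinearity of the gradients genuinely yields functional dependence in this infinite-dimensional Poisson setting, rather than merely along individual orbits; this requires ruling out that the orbit-dependent constant $c=c[\tilde{l}]$ in $l\,(\nabla h)^{2}=c$ introduces a second independent direction, which should follow from the symmetry of the second variation of $h$. Second, I must control the branch and single-valuedness of $\sqrt{l}$ on $\mathbb{T}_{\mathbb{C}}^{1}$ together with its holomorphic continuation to $\mathbb{D}_{\pm}^{1}$ and the normalization $g(\infty)=1$, so that the period $\oint\sqrt{l}\,\mathrm{d}\lambda$ is well defined on the generic (nowhere-vanishing) part of $\tilde{\mathcal{G}}^{\ast}$ on which the Casimir ring is being counted.
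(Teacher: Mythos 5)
Your proposal is correct and takes essentially the same route as the paper: the paper's entire proof is the one-line observation that any asymptotic solution of the determining equation \eqref{T26d} is fixed only up to multiplication by an arbitrary $\lambda$-independent factor $\sigma(t,y)$, which is precisely your reduction $l\,(\nabla h)^{2}=c$ from \eqref{T23d} forcing every Casimir gradient to be collinear with $l^{-1/2}$. Your explicit generator $h_{0}=\oint\sqrt{l}\,\mathrm{d}\lambda$ and the caveats about branches of $\sqrt{l}$ and about upgrading pointwise collinearity to functional dependence elaborate beyond the paper's formal argument, but they do not change the method.
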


\begin{proof}Any asymptotic solution to the determining equation (\ref{T26d}) is invariant under the multiplication $\nabla h^{(p_{t})}(\tilde{l}) \rightarrow\sigma(t,y)\nabla h^{(p_{t})}(\tilde{l})$ by an arbitrary smooth function $\sigma\colon \mathbb{R}^{2}\rightarrow\mathbb{R}$. This proves the lemma.
\end{proof}

Consider now the gradient $\nabla h^{(p_{t})}(\tilde{l})\in\mathcal{\tilde{G}}$ of the Casimir functional $h^{(p_{t})}\in\mathrm{D} (\mathcal{\tilde{G}}^{\ast}\mathcal{)}$, that satisfies the condition, which is identical to (\ref{T22dd}), that is,
\begin{gather*}
\operatorname{ad}_{\nabla h^{(p_{t})}(\tilde{l})}^{\ast}\tilde{l}=0, 
\end{gather*}
which is equivalent to the following linear differential equation
\begin{gather}
2l(\lambda;y,t)\frac{\partial}{\partial\lambda}\nabla h^{(p_{t})}(\tilde {l})+\nabla h^{(p_{t})}(\tilde{l})\frac{\partial}{\partial\lambda}l(\lambda;y,t)=0, \label{T26d}
\end{gather}
and whose solution can be naturally represented as the asymptotic series{\samepage
\begin{gather}
\nabla h^{(p_{t})}(\tilde{l})\sim\frac{1}{(\lambda+u)(\lambda+v)}+\sum _{j\in\mathbb{Z}_{+}}\beta_{j}\lambda^{j} \label{T27d}
\end{gather}
for some nontrivial coefficients $\beta_{j}\in C^{\infty}\big(\mathbb{R}^{2};\mathbb{R}\big) $ successively determined from the equation~(\ref{T26d}).}

As follows from Lemma \ref{Lm_T1}, the asymptotic solution~(\ref{T27d}) to the determining equa\-tion~(\ref{T26d}) exists exclusively owing to its natural invariance under the multiplication
\begin{gather*}
\nabla h^{(p_{t})}(\tilde{l})\rightarrow\sigma(t,y)\nabla h^{(p_{t})}(\tilde{l})
 \end{gather*} by arbitrary smooth function $\sigma\colon \mathbb{R}^{2}\rightarrow\mathbb{R}$. In particular, this means that asymptotically as $\lambda\rightarrow \infty$ the product $l(\lambda;y,t)(\nabla h^{(p_{t})}(l))^{2}\sim0$, for otherwise, if $l(\lambda;y,t)(\nabla h^{(p_{t})}(l))^{2}\nrightarrow0$, this product becomes, owing to~(\ref{T23d}), a nonzero constant involving the parameter $\lambda\in\mathbb{C}$. The latter, evidently, means that $\nabla h^{(p_{t})}(l) =\nabla h^{(p_{y})}(\tilde {l})\sigma(t,y)$ for any smooth arbitrary function $\sigma \in C^{\infty }\big(\mathbb{R}^{2};\mathbb{R}\big)$, producing no new flow with respect to the evolution parameter $t\in \mathbb{R}$.

Construct now the gradient projection
\begin{gather*}
\nabla h^{(t)}(\tilde{l}):=\nabla h^{(p_{t})}(\tilde{l})|_{-}=\frac {1}{(\lambda+u)(\lambda+v)}
\end{gather*}
generating the second flow of (\ref{T20d}). As a consequence of the above results we can easily derive the following compatibility condition for the flows~(\ref{T20d}):
\begin{gather*}
\left[\frac{\partial}{\partial y}-\nabla h^{(y)}(\tilde{l}),\frac{\partial }{\partial y}-\nabla h^{(ty)}(\tilde{l})\right]=0, 
\end{gather*}
which is equivalent modulo the dynamical system~(\ref{T18c}) to the following system of two \textit{a priori }compatible linear vector field equations:
\begin{gather}
\frac{\partial\psi}{\partial y}-\frac{\lambda+u+v}{(\lambda+u)(\lambda +v)}
\frac{\partial\psi}{\partial\lambda}=0,\qquad \frac{\partial\psi }{\partial t}-\frac{1}{(\lambda+u)(\lambda+v)} \frac{\partial\psi}{\partial\lambda}=0 \label{T30d}
\end{gather}
for $\psi\in C^{2}(\mathbb{R};\mathbb{R})$ and all $(\lambda;t,y)\in \mathbb{C\times R}^{2}$. Thus, we can formulate the results, obtained above, as the following proposition.

\begin{Proposition}\label{Prop_T3}A system $J_{\rm lin}^{1}\big(\mathbb{R}\times\mathbb{R}^{2};\mathbb{R}\big)|_{\mathcal{E}}$ of the linear first-order differential relations~\eqref{T30d} on the covering jet manifold $J^{1}\big(\mathbb{R}\times\mathbb{R}^{2};\mathbb{R}\big)$ is compatible, that is, it holds for any its smooth solution $\psi\colon \mathbb{R}\times\mathbb{R} ^{2}\rightarrow\mathbb{R}$ at all points $(\lambda;y,t)\in \mathbb{R} \times\mathbb{R}^{2} $ iff the function $u\colon \mathbb{R}^{2}\rightarrow \mathbb{R}$ satisfies the Gibbons--Tsarev equation~\eqref{T18a}.
\end{Proposition}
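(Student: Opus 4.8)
The plan is to read the system~\eqref{T30d} as a pair of linear transport equations $X\psi=0$, $Y\psi=0$ for the single scalar $\psi$ on $\mathbb{R}\times\mathbb{R}^{2}$ with coordinates $(\lambda;y,t)$, where
\[
X:=\frac{\partial}{\partial t}+a\,\frac{\partial}{\partial\lambda},\qquad
Y:=\frac{\partial}{\partial y}+b\,\frac{\partial}{\partial\lambda},
\]
with $a:=-\big[(\lambda+u)(\lambda+v)\big]^{-1}$ and $b:=-(\lambda+u+v)\big[(\lambda+u)(\lambda+v)\big]^{-1}$. By the Frobenius--Cartan criterion the overdetermined system $X\psi=0$, $Y\psi=0$ is compatible, i.e.\ admits a full family of common first integrals, exactly when the rank-two distribution $\langle X,Y\rangle$ is involutive, that is $[X,Y]\in\langle X,Y\rangle$. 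Since $X$ and $Y$ carry the two independent transport directions $\partial/\partial t$ and $\partial/\partial y$ with unit coefficients while $[X,Y]$ lies entirely along $\partial/\partial\lambda$, this involutivity collapses to the single scalar condition $[X,Y]=0$.

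The next step is to compute the commutator. One finds
\[
[X,Y]=\big(X(b)-Y(a)\big)\frac{\partial}{\partial\lambda}
=\big(b_{t}-a_{y}+ab_{\lambda}-ba_{\lambda}\big)\frac{\partial}{\partial\lambda},
\]
and the two $\partial/\partial\lambda$-cross terms simplify neatly to $ab_{\lambda}-ba_{\lambda}=D^{-2}$ with $D:=(\lambda+u)(\lambda+v)$. After clearing the common factor $D^{-2}$ the coefficient of $\partial/\partial\lambda$ is a polynomial in $\lambda$ whose top ($\lambda^{2}$) coefficient cancels identically, leaving a polynomial of degree one; requiring its two coefficients to vanish yields
\[
u_{t}v+uv_{t}=u_{y}+v_{y},\qquad u^{2}v_{t}+u_{t}v^{2}-u_{y}v-uv_{y}+1=0.
\]
I would then check that, for $u\neq v$, this pair is equivalent to the dynamical system~\eqref{T18c}: writing $p:=u_{y}-vu_{t}$ and $q:=v_{y}-uv_{t}$, the first relation reads $p+q=0$ and the second $p(u-v)+1=0$, which forces $p=-(u-v)^{-1}$ and $q=(u-v)^{-1}$, i.e.\ precisely~\eqref{T18c}.

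Finally I would pass to the Gibbons--Tsarev variable via the substitution of Section~\ref{Subec_H4}, which amounts to $u+v=-z_{t}$ and $uv=-z_{y}$, so that $u,v$ are the roots of $\mu^{2}+z_{t}\mu-z_{y}=0$ and $u-v=\sqrt{z_{t}^{2}+4z_{y}}$. Under this substitution the first compatibility relation becomes the vacuous identity $-z_{ty}=-z_{yt}$, so all the content resides in the second one. Differentiating $u+v=-z_{t}$, $uv=-z_{y}$ and using the induced expressions $u_{t}=(z_{yt}-z_{tt}u)/(u-v)$, $v_{t}=(z_{tt}v-z_{yt})/(u-v)$, the remaining relation (equivalently the difference of the two equations in~\eqref{T18c}) reduces after clearing $u-v$ to $z_{yy}+z_{t}z_{ty}-z_{y}z_{tt}+1=0$, which is~\eqref{T18a}; reversing the steps gives the converse.

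The step I expect to be the real obstacle is the middle one: verifying that, after clearing $D^{2}$, the numerator really collapses into exactly the two displayed scalar relations with no spurious extra constraint, and that the passage to~\eqref{T18c} is reversible, which is where the genericity $u\neq v$ (that is $z_{t}^{2}+4z_{y}\neq0$) enters. The forward implication alone is in fact already at hand from the Adler--Kostant--Symes construction preceding the statement, since the coadjoint flows~\eqref{T20d} commute by construction and reduce modulo~\eqref{T18c} to~\eqref{T30d}; the direct commutator computation is what additionally secures the converse and packages both directions into the single clean equivalence $[X,Y]=0\Leftrightarrow\eqref{T18c}\Leftrightarrow\eqref{T18a}$.
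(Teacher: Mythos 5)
Your proof is correct, and it takes a genuinely different route from the paper. The paper never verifies Proposition~\ref{Prop_T3} by direct computation: it obtains the vector fields in~\eqref{T30d} as the projected gradients $\nabla h^{(y)}(\tilde{l})$, $\nabla h^{(t)}(\tilde{l})$ of Casimir functionals on the adjoint space of $\operatorname{Diff}\big(\mathbb{T}^{1}_{\mathbb{C}}\big)$, so that the flows~\eqref{T20d} commute \emph{by construction} via Adler--Kostant--Symes theory, with Lemma~\ref{Lm_T1} securing essential uniqueness of the second Casimir; the equivalence of the resulting compatibility with~\eqref{T18c} and hence with~\eqref{T18a} is then taken from the cited substitution of Gibbons--Tsarev and Mokhov rather than derived. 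You instead apply the Frobenius criterion to $X=\partial_{t}+a\,\partial_{\lambda}$, $Y=\partial_{y}+b\,\partial_{\lambda}$, correctly observe that involutivity collapses to $[X,Y]=0$ because the commutator is purely $\lambda$-directional, and your computation checks out: with $s:=u+v$, $p:=uv$, $D:=(\lambda+u)(\lambda+v)$ one indeed gets $ab_{\lambda}-ba_{\lambda}=D^{-2}$ and $[X,Y]=D^{-2}\big[(p_{t}-s_{y})\lambda+\big(sp_{t}-s_{t}p-p_{y}+1\big)\big]\partial_{\lambda}$, whose two coefficient conditions are exactly your displayed pair and, assuming $u\neq v$, exactly~\eqref{T18c}. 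Your route buys the full ``iff'' in one self-contained calculation --- in particular the converse direction, which in the paper is only implicit in the by-construction commutativity and the external references --- and it exposes the genericity hypothesis $z_{t}^{2}+4z_{y}\neq0$ that the paper's substitution silently assumes; what it does not provide is what the paper's Lie-algebraic scheme is really after, namely an explanation of \emph{where} the fields in~\eqref{T30d} come from (Casimir gradients on a coadjoint orbit) and the attendant Hamiltonian structure promised for later reduction. One small economy you missed: since the second compatibility condition is precisely $sp_{t}-s_{t}p-p_{y}+1=0$, the substitution $s=-z_{t}$, $p=-z_{y}$ turns it \emph{directly} into $z_{t}z_{ty}-z_{y}z_{tt}+z_{yy}+1=0$, i.e.,~\eqref{T18a}, so your detour through the induced expressions for $u_{t}$, $v_{t}$ in terms of $z$ is unnecessary.
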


Moreover, taking into account that the flows (\ref{T20d}) are, by construction, Hamiltonian systems on the coadjoint space $\mathcal{\tilde {G}}^{\ast}$, their reduction on the space of functional variables $(u,v)\in C^{\infty}\big(\mathbb{R}^{2};\mathbb{R}^{2}\big)$ will be, respectively, Hamiltonian too. This reduction scheme is now under study and is expected to be presented elsewhere.

\subsection{Example: the ABC equation}

For the second example we consider a compatible system $J^{1}\big(\mathbb{R}\times \mathbb{R}^{2};\mathbb{R}\big)|_{\mathcal{E}}$ of the nonlinear first-order differential relations~(\ref{T3}) on the jet manifold $J^{1}\big(\mathbb{R}\times\mathbb{R}^{2};\mathbb{R}\big)$. It is easy to check that the associated system of vector field flows
\begin{gather*}
\frac{\partial x}{\partial t}=\frac{u_{t}w}{u_{x}k_{1}(k_{1}+k_{2}-1)},\qquad \frac{\partial x}{\partial y}=\frac{u_{y}w}{u_{x}k_{1}(w+k_{1}+k_{2}-1)}, 
\end{gather*}
modulo differential relations (\ref{T3}), is not compatible for all $(w;t,y)\in \mathbb{R}\times\mathbb{R}^{2}$. Thus, we need to construct
such a mapping~(\ref{T14}) that the resulting system of nonlinear differential relations~(\ref{T15}) will possess an associated system of vector field flows already compatible for all $(\tilde{w};t,y)\in \mathbb{R}\times\mathbb{R}^{2}$. To do this let us get rid from the very
beginning in the equations~(\ref{T3}) of the strictly linear part, giving rise to the representation its solution as $w(x;t,y)=(u_{x}(x;t,y))^{\alpha} \tilde {w}(x;t,y)$ for $\alpha=(k_{1}+k_{2}-1)/k_{1}$ and all $(x;t,y)\in \mathbb{R}\times\mathbb{R}^{2}$. Substituting this into (\ref{T3}) yields the following \textit{a~priori} compatible system $J^{1}\big(\mathbb{R}\times\mathbb{R}^{2};\mathbb{R}\big)|_{\mathcal{E}}$ of the first-order differential relations, cf.~\cite{KrSeMo},
\begin{gather}
\frac{\partial\tilde{w}}{\partial t}+ \frac{\tilde{w}u_{x}^{\alpha-1}u_{t}}{k_{1}\alpha}\frac{\partial\tilde{w}}{\partial x}+\frac{\tilde{w} ^{2}u_{x}^{\alpha}(u_{t}u_{xx}-k_{1}u_{xx}u_{xt})}{k_{1}^{2}}=0,\nonumber \\
\frac{\partial\tilde{w}}{\partial y}+ \frac{k_{2}\tilde{w}u_{x}^{\alpha-1}u_{y}}{k_{1}(\tilde{w}u_{x}^{\alpha}+\alpha)}\frac{\partial\tilde{w}}{\partial x}+\frac{k_{2}\alpha\tilde{w}^{2}u_{x}^{\alpha}u_{y}u_{xx}}{k_{1}^{2}(\tilde{w}u_{x}^{\alpha}+\alpha)}=0\label{T22}
\end{gather}
on the jet manifold $J^{1}\big(\mathbb{R}\times\mathbb{R}^{2};\mathbb{R}\big)$. We can now easily check that the above expression $\tilde {w}(x;t,y)=(u_{x}(x;t,y))^{-\alpha}w(x;t,y)$ for all $(x;t,y)\in \mathbb{R}\times\mathbb{R}^{2}$ determining the mapping~(\ref{T14}), is exactly the one we searched for, inasmuch as the associated system of vector field flows
\begin{gather*}
\frac{\partial x}{\partial t}= \frac{\tilde{w}u_{x}^{\alpha-1}u_{t}}{k_{1}\alpha},\qquad \frac{\partial x}{\partial y}= \frac{k_{2}\tilde {w}u_{x}^{\alpha-1}u_{y}}{k_{1}(\tilde{w}u_{x}^{\alpha}+\alpha)} 
\end{gather*}
on $\mathbb{R}\times\mathbb{R}^{2}$ proves to be compatible modulo the system $J^{1}\big(\mathbb{R}\times\mathbb{R}^{2};\mathbb{R}\big)|_{\mathcal{E}}$ of \textit{a priori} compatible differential relations (\ref{T22}) on $J^{1}\big(\mathbb{R}\times\mathbb{R}^{2};\mathbb{R}\big)$. Based on this compatibility result one can easily construct the corresponding linearizing system $J_{\rm lin}^{1}\big(\mathbb{R}^{2} \times\mathbb{R}^{2};\mathbb{R}\big)|_{\mathcal{E}}$ on the covering jet manifold $J^{1}\big(\mathbb{R}^{2}\times\mathbb{R}^{2};\mathbb{R}\big)$ by implemening the inverse implication (\ref{T10}) as the Lax--Sato representation
\begin{gather}
\frac{\partial\psi}{\partial t}+ \frac{\lambda u_{x}^{\alpha-1}u_{t}}{k_{1}\alpha}\frac{\partial\psi}{\partial x}-\frac{\lambda^{2}u_{x}^{\alpha
}(u_{t}u_{xx}-k_{1}u_{xx}u_{xt})}{k_{1}^{2} }\frac{\partial\psi}{\partial\lambda}=0, \nonumber\\
\frac{\partial\psi}{\partial y}+ \frac{k_{2}\lambda u_{x}^{\alpha-1}u_{y}}{k_{1}(\lambda u_{x}^{\alpha}+\alpha)}\frac{\partial\psi}{\partial x}-\frac{\lambda^{2}k_{2}\alpha u_{x}^{\alpha-1}u_{y}u_{xx}}{k_{1}^{2}(\lambda
u_{x}^{\alpha}+\alpha)}\frac{\partial\psi}{\partial\lambda}=0,\label{T23}
\end{gather}
exactly coinciding with (\ref{T4}), where, by definition, $\psi (x,\lambda;t,y):=\alpha(\tilde{w}(x;t,y)-\lambda)$ for all $(x,\lambda ;t,y)\in\mathbb{R}^{2}\times\mathbb{R}^{2}$ and any smooth mapping $\alpha\colon \mathbb{R}\rightarrow\mathbb{R}$. Thus, the linear differential system (\ref{T23}) solves the problem, which was posed above, of constructing the inverse implication~(\ref{T10}) for the compatible nonlinear differential system $J^{1}\big(\mathbb{R}\times\mathbb{R}^{2};\mathbb{R}\big)|_{\mathcal{E}}$~(\ref{T3}), thereby proving Proposition~\ref{T1}.

\subsection{Example: the Manakov--Santini equations}

The Manakov--Santini equations
\begin{gather}
u_{tx}+u_{yy}+(uu_{x})_{x}+v_{x}u_{xy}-v_{y}u_{xx} =0,\nonumber \\
v_{xt}+v_{yy}+uv_{xx}+v_{x}v_{xy}-v_{y}v_{xx} =0, \label{T24}
\end{gather}
where $(u,v)\in C^{\infty}\big(\mathbb{R}\times\mathbb{R}^{2};\mathbb{R}^{2}\big)$, as is well known \cite{MaSa}, are obtained as some generalization of the dispersionless limit for the Kadomtsev--Petviashvili equation. Equations~(\ref{T24}) possess the following compatible nonlinear first-order differential covering $J^{1}\big(\mathbb{R}\times\mathbb{R}^{2};\mathbb{R}\big)|_{\mathcal{E}}\subset J^{1}\big(\mathbb{R}\times\mathbb{R}^{2};\mathbb{R}\big)$:
\begin{gather}
\frac{\partial w}{\partial t}+\big(w^{2}-wv_{x}+u-v_{y}\big)\frac{\partial w}{\partial x}+(w-v_{x})(u_{x}-wv_{xx})-u_{y}+v_{yy}+v_{x}v_{xy}=0, \nonumber\\
\frac{\partial w}{\partial y}+w\frac{\partial w}{\partial x}-v_{xx}w+(u-v_{y})_{x}=0,\label{T25}
\end{gather}
giving rise for all $(x;t,y)\in\mathbb{R}\times\mathbb{R}^{2}$ to the Manakov--Santini differential relation $\mathcal{E}[x;y,t;u,v]=0$ (\ref{T24}) as a~submanifold on the associated jet manifold $J^{2}\big(\mathbb{R}\times\mathbb{R}^{2};\mathbb{R}^{2}\big)$. It is now easy to check that the system of vector field flows
\begin{gather*}
\frac{\partial x}{\partial t}=w^{2}-wv_{x}+u-v_{y},\qquad\frac{\partial x}{\partial y}=w, 
\end{gather*}
naturally related to~(\ref{T25}), is for all $(t,y)\in\mathbb{R}^{2}$ not compatible modulo these differential relations~(\ref{T25}). Thus, one needs to construct such an isomorphic transformation
\begin{gather*}
J^{1}\big(\mathbb{R}^{2}\times\mathbb{R}^{2};\mathbb{R}\big)\ni(x,w;t,y)\rightarrow
(x,\tilde{w};t,y)\in J^{1}\big(\mathbb{R}^{2}\times\mathbb{R}^{2};\mathbb{R}\big)
\end{gather*}
that the expression $w(x;y,t)=\beta(x,\tilde{w};y,t)$ for some, in general nonlinear, smooth mapping $\beta\colon \mathbb{R}^{2}\times \mathbb{R}^{2}\rightarrow\mathbb{R}$ transforms the first-order differential covering~(\ref{T25}) into an equivalent compatible first-order differential covering~(\ref{T15}), for which the corresponding vector field flows~(\ref{T16}) already become compatible. This problem is solved easily enough giving rise to the simple mapping
\begin{gather*}
w=\tilde{w}+v_{x} 
\end{gather*}
using which one obtains the following compatible first-order differential covering:
\begin{gather}
\frac{\partial\tilde{w}}{\partial t}+\big(\tilde{w}^{2}+\tilde{w}v_{x}+u-v_{y}\big)
\frac{\partial\tilde{w}}{\partial x}-u_{y}+\tilde{w}u_{x}=0, \nonumber\\
\frac{\partial\tilde{w}}{\partial y}+(v_{x}+\tilde{w})\frac{\partial\tilde{w}}{\partial x}+u_{x}=0.\label{T29}
\end{gather}
It is easy to check that the system of vector field flows
\begin{gather*}
\frac{\partial x}{\partial t}=\tilde{w}^{2}+\tilde{w}v_{x}+u-v_{y}, \qquad\frac{\partial x}{\partial y}=v_{x}+\tilde{w}, 
\end{gather*}
naturally related to (\ref{T29}), is already compatible for all $(t,y)\in\mathbb{R}^{2}$ modulo the differential relations~(\ref{T29}). Based on this compatibility result stated above, one can easily construct the associated linearizing first-order differential system $J_{\rm lin}^{1}\big(\mathbb{R}^{2}\times\mathbb{R} ^{2};\mathbb{R}\big)|_{\mathcal{E}}$ on the covering jet manifold $J^{1}\big(\mathbb{R}^{2}\times\mathbb{R}^{2};\mathbb{R}\big)$, realizing the inverse
implication (\ref{T10}) as the Lax--Sato representation
\begin{gather*}
\frac{\partial\psi}{\partial t}+\big(\lambda^{2}+\lambda v_{x}+u-v_{y}\big)\frac{\partial\psi}{\partial x}+(u_{y}-\lambda u_{x})\frac{\partial\psi }{\partial\lambda}=0, \nonumber\\
\frac{\partial\psi}{\partial y}+(v_{x}+\lambda)\frac{\partial\psi}{\partial x }-u_{x}\frac{\partial\psi}{\partial\lambda}=0,
\end{gather*}
thus proving Proposition \ref{Prop_T2}.

\section{The contact geometry linearization covering scheme}\label{Sec_T3}

\subsection{The setup}

We consider two Hamilton--Jacobi-type compatible for all $(x;\tau):=(x;t,y)\in\mathbb{R}\times\mathbb{R}^{2}$ in general nonlinear first-order differential relations:
\begin{gather}
\frac{\partial z}{\partial t}+\tilde{H}^{(t)}(x,z,\partial z/\partial
x;t,y)=0,\qquad \frac{\partial z}{\partial y}+\tilde{H}^{(y)}(x,z,\partial z/\partial x;t,y)=0, \label{Th1}
\end{gather}
where $z:\mathbb{R}^{3}\mathbb{\rightarrow R}$ is the so-called ``\textit{action function}'' and $\tilde{H}^{(t)},\tilde{H}^{(y)}\colon \mathbb{R}^{3}\times \mathbb{R}^{2}\mathbb{\rightarrow R}$ are some smooth generalized Hamiltonian functions. The relations (\ref{Th1}) follow from the contact geometry \cite{BrGrGr,DuNoFo} interpretation of some mechanical systems generated by some vector fields. Namely, a differential one-form $\alpha ^{(1)}$ $\in\Lambda^{1}\big(\mathbb{R}^{3}\times\mathbb{R}\big)$, defined by the expression
\begin{gather*}
\alpha^{(1)}:={\rm d}z-\lambda {\rm d}x, 
\end{gather*}
is \textit{contact}, as $\alpha^{(1)}\wedge d\alpha^{(1)}$ defines a volume form on~$\mathbb{R}^{3}$, and vector fields $X_{H^{(t)}},X_{H^{(y)}}\in\Gamma\big(T\big(\mathbb{R}^{3}\times\mathbb{R}\big)\big)$ are called, respectively, \textit{contact vector fields} (with respect to $\alpha^{(1)}$), if there exist some functions $\mu^{(t)},\mu^{(y)}\colon \mathbb{R}^{3}\times\mathbb{R}\rightarrow\mathbb{R}$ such that for all $(x,z;\tau)\in\mathbb{R}^{2}\times\mathbb{R}^{2}$ the
following equalities
\begin{gather}
-i_{X_{H^{(t)}}}\alpha^{(1)} =H^{(t)}:=\tilde{H}^{(t)}|_{\partial
z/\partial x=\lambda},\qquad -i_{X_{H^{(y)}}}\alpha^{(1)}=H^{(y)}:=\tilde{H}^{(y)}|_{\partial z/\partial x=\lambda},\nonumber \\
\mathcal{L}_{H^{(t)}}\alpha^{(t)} =\mu^{(t)}\alpha^{(t)},\qquad \mathcal{L}_{H^{(y)}}\alpha^{(y)}=\mu ^{(y)}\alpha^{(y)}\label{Th3}
\end{gather}
hold, where $\mathcal{L}_{H^{(t)}}$, $\mathcal{L}_{H^{(y)}}$ are the Lie derivatives~\cite{Cart,DuNoFo,Godb} with respect to the
vector fields $X_{H^{(t)}},X_{H^{(y)}}\in\Gamma\big( T\big(\mathbb{R}^{3}\times \mathbb{R}\big)\big)$. From the conditions~(\ref{Th3}) one easily finds~\cite{BlPrSaWaKu, Kiri} that
\begin{gather}
X_{H^{(t)}} =\frac{\partial H^{(t)}}{\partial\lambda}\frac{\partial }{\partial x}-\left(\frac{\partial H^{(t)}}{\partial x}+\lambda\frac{\partial
H^{(t)}}{\partial z}\right)\frac{\partial}{\partial\lambda}+\left(-H^{(t)}+\lambda
\frac{\partial H^{(t)}}{\partial\lambda}\right)\frac{\partial}{\partial z}, \nonumber\\
X_{H^{(y)}} =\frac{\partial H^{(y)}}{\partial\lambda}\frac{\partial }{\partial x}-\left(\frac{\partial H^{(y)}}{\partial x}+\lambda\frac{\partial
H^{(y)}}{\partial z}\right)\frac{\partial}{\partial\lambda}+\left(-H^{(y)}+\lambda
\frac{\partial H^{(y)}}{\partial\lambda}\right)\frac{\partial}{\partial z},\label{Th4}
\end{gather}
where we have put $H^{(t)}:=\tilde{H}^{(t)}|_{\partial z/\partial x=\lambda }$, $H^{(y)}:=\tilde{H}^{(y)}|_{\partial z/\partial x=\lambda}$, and compatibility of the nonlinear relations~(\ref{Th1}) is equivalent to the commutativity condition for the following vector fields:
\begin{gather*}
\left[\frac{\partial}{\partial t}+X_{H^{(t)}},\frac{\partial}{\partial y}+X_{H^{(y)}}\right]=0 
\end{gather*}
for all $(x,z,\lambda;\tau)\in\mathbb{R}^{3}\times\mathbb{R}^{2}$, depending parametrically on $\lambda\in\mathbb{R}$. The latter condition can be rewritten as a~compatible Lax--Sato representation for the vector field equations
\begin{gather*}
\frac{\partial\psi}{\partial t}+X_{H^{(t)}}\psi=0,\qquad \frac{\partial \psi}{\partial y}+X_{H^{(y)}}\psi=0 
\end{gather*}
for smooth invariant functions $\psi\in C^{2}\big(\mathbb{R}^{3}\times \mathbb{R}^{2};\mathbb{R}\big)$ and all $(x,z,\lambda;\tau)\in\mathbb{R}^{3}\times\mathbb{R}^{2}$.

\begin{Remark} It is worth mentioning here that in the case when the Hamiltonian functions in~(\ref{Th1}) do not depend on the ``\textit{action function}'' $z\colon \mathbb{R}^{3}\mathbb{\rightarrow R}$, the contact vector fields naturally reduce to the classical Hamiltonian ones:
\begin{gather*}
X_{H^{(t)}}=\frac{\partial H^{(t)}}{\partial\lambda}\frac{\partial}{\partial
x}-\frac{\partial H^{(t)}}{\partial x}\frac{\partial}{\partial\lambda},\qquad X_{H^{(y)}}=\frac{\partial H^{(y)}}{\partial\lambda}\frac{\partial
}{\partial x}-\frac{\partial H^{(y)}}{\partial x}\frac{\partial}{\partial\lambda}, 
\end{gather*}
well known \cite{DuNoFo} from the symplectic geometry.
\end{Remark}

\subsection{Example: the differential Toda singular manifold equation}

The above general contact geometry linearization scheme is a modification of the one originally discovered and applied to integrable 4D disper\-sion\-less equations by A.~Sergyeyev in~\cite{Serg-1}. Below we will apply this scheme to a~degenerate case when the system (\ref{Th1}) is given by the following linear first-order differential relations:
\begin{gather}
\frac{\partial z}{\partial t}+\frac{\big(e^{-2u_{t}}-1\big)}{2u_{x}}\frac{\partial
z}{\partial x}=0,\qquad \frac{\partial z}{\partial y}-u_{y}u_{x}^{-1}e^{-2u_{t}}\frac{\partial z}{\partial x}=0 \label{Th7}
\end{gather}
for a smooth mapping $z\colon \mathbb{R}\times\mathbb{R}^{2}\mathbb{\rightarrow R}$, whose compatibility condition is the interesting~\cite{BuFeTs} differential Toda singular manifold equation~(\ref{T4ab}) for a smooth function $u\colon \mathbb{R}\times\mathbb{R}^{2}\mathbb{\rightarrow R}$ for all $(x;y,t)\in\mathbb{R}\times\mathbb{R}^{2}$, defining a differential relation $J^{2}\big(\mathbb{R}^{2}\times\mathbb{R}^{2};\mathbb{R}\big)|_{\mathcal{E}} \subset J^{2}\big(\mathbb{R}^{2}\times\mathbb{R}^{2};\mathbb{R}\big)$
on the jet manifold $J^{2}\big(\mathbb{R}^{2}\times \mathbb{R}^{2};\mathbb{R}\big)$
\begin{gather}
u_{xy}\sinh^{2}u_{t}=u_{x}u_{y}u_{tt}. \label{Th8}
\end{gather}
Although the differential relations (\ref{Th7}) are linear, they contain no ``spectral'' parameter $\lambda\in\mathbb{R}$ using which
one can construct the conservation laws for (\ref{T8}) and apply the modified inverse scattering transform for constructing its particular exact solutions.

However, the above contact geometry linearization scheme makes it possible to represent the system in question as a set of the compatible Hamilton--Jacobi equations
\begin{gather*}
\frac{\partial z}{\partial t}+\frac{\big(e^{-2u_{t}}-1\big)}{2u_{x}}\lambda=0,\qquad
\frac{\partial z}{\partial y}-\frac{u_{y}e^{-2u_{t}}}{u_{x}}\lambda =0
\end{gather*}
with the contact Hamiltonians
\begin{gather*}
H^{(t)}:=\frac{\big(e^{-2u_{t}}-1\big)}{2u_{x}}\lambda,\qquad H^{(y)}:=-\frac{u_{y}e^{-2u_{t}}}{u_{x}}\lambda, 
\end{gather*}
where the function $u\colon \mathbb{R}^{2}\times\mathbb{R}^{2}\mathbb{\rightarrow R}$ depends here on the additional variable $z\in\mathbb{R}$. Taking into account (\ref{Th4}), one can construct the corresponding extended contact vector fields
\begin{gather*}
\tilde{X}_{H^{(t)}} :=\frac{\partial}{\partial t}+X_{H^{(t)}}=\frac{\partial}{\partial t}+\frac{(e^{-2u_{t}}-1)}{2u_{x}}\frac{\partial }{\partial x}-\left[ \lambda\left( \frac{e^{-2u_{t}}-1}{2u_{x}}\right)_{x}+\lambda^{2}\left( \frac{e^{-2u_{t}}-1}{2u_{x}}\right) _{z}\right] \frac{\partial}{\partial\lambda}, \nonumber\\
\tilde{X}_{H^{(y)}} :=\frac{\partial}{\partial y}+X_{H^{(y)}}=\frac{\partial}{\partial y}-\frac{u_{y}e^{-2u_{t}}}{u_{x}}\frac{\partial }{\partial x}+\left[ \lambda\left( \frac{u_{y}e^{-2u_{t}}}{u_{x}}\right)_{x}+\lambda^{2}\left( \frac{u_{y}e^{-2u_{t}}}{u_{x}}\right) _{z}\right]
\frac{\partial}{\partial\lambda},
\end{gather*}
compatible on the solution set to the nonlinear differential relation (\ref{Th8}) on the jet manifold $J^{2}\big(\mathbb{R}^{2}\times\mathbb{R}^{2};\mathbb{R}\big)$. The latter makes it possible to state our final proposition on the contact geometry linearization covering of the system~(\ref{Th8}).

\begin{Proposition}The linear first-order differential relations~\eqref{Th7} on the jet manifold $J^{2}\big(\mathbb{R}^{2}\times\mathbb{R}^{2};\mathbb{R}\big)$ allow the following quadratic in the parameter $\lambda\in\mathbb{R}$ Lax--Sato type linearization covering
\begin{gather*}
\frac{\partial\psi}{\partial t}+\frac{\big(e^{-2u_{t}}-1\big)}{2u_{x}}\frac
{\partial\psi}{\partial x}-\left[ \lambda\left( \frac{e^{-2u_{t}}-1}{2u_{x}}\right) _{x}+\lambda^{2}\left( \frac{e^{-2u_{t}}-1}{2u_{x}}\right)
_{z}\right] \frac{\partial\psi}{\partial\lambda} =0,\nonumber\\
\frac{\partial\psi}{\partial y}-\frac{u_{y}e^{-2u_{t}}}{u_{x}}\frac
{\partial\psi}{\partial x}+\left[ \lambda\left( \frac{u_{y}e^{-2u_{t}}}{u_{x}}\right) _{x}+\lambda^{2}\left( \frac{u_{y}e^{-2u_{t}}}{u_{x}}\right) _{z}\right] \frac{\partial\psi}{\partial\lambda} =0
\end{gather*}
for smooth invariant functions $\psi\in C^{2}\big(\mathbb{R}^{3}\times\mathbb{R}^{2};\mathbb{R}\big)$, all $(x,z,\lambda;\tau)\in\mathbb{R}^{3}\times \mathbb{R}^{2}$ and any smooth solution $u\colon \mathbb{R}^{2}\times\mathbb{R}^{2}\mathbb{\rightarrow R}$ to the differential relation~\eqref{Th8}.
\end{Proposition}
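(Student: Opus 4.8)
The plan is to read the covering straight off the contact-geometry construction already assembled above and then to reduce its compatibility to equation~(\ref{Th8}). First I would identify the linear system~(\ref{Th7}) with the degenerate Hamilton--Jacobi pair~(\ref{Th1}) carrying the contact Hamiltonians $H^{(t)}=\frac{e^{-2u_{t}}-1}{2u_{x}}\lambda$ and $H^{(y)}=-\frac{u_{y}e^{-2u_{t}}}{u_{x}}\lambda$, obtained by putting $\partial z/\partial x=\lambda$. Both are homogeneous of degree one in $\lambda$, so in the general formula~(\ref{Th4}) the $\partial/\partial z$-component $-H^{(\tau)}+\lambda\,\partial H^{(\tau)}/\partial\lambda$ vanishes identically, while the $\partial/\partial x$- and $\partial/\partial\lambda$-components collapse to exactly the coefficients displayed in the statement. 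Hence the extended contact fields $\tilde{X}_{H^{(t)}}$, $\tilde{X}_{H^{(y)}}$ written out just above are precisely the operators annihilating $\psi$ in the asserted covering, and the proposition reduces to showing that this linear system admits a common smooth invariant $\psi$ for every value of the parameter $\lambda$.

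By the general fact recorded in the setup of Section~\ref{Sec_T3}, such a common invariant exists for all $\lambda$ iff the two extended fields commute, $\big[\tilde{X}_{H^{(t)}},\tilde{X}_{H^{(y)}}\big]=0$, and this commutativity is by construction equivalent to the compatibility of the Hamilton--Jacobi pair~(\ref{Th7}). Since the compatibility condition of~(\ref{Th7}) was already identified as the differential Toda singular manifold relation~(\ref{Th8}), the $\partial/\partial x$-component of the bracket---which is exactly the zero-curvature expression $h^{(y)}_{t}-h^{(t)}_{y}+h^{(t)}h^{(y)}_{x}-h^{(y)}h^{(t)}_{x}$ with $h^{(t)}=\frac{e^{-2u_{t}}-1}{2u_{x}}$ and $h^{(y)}=-\frac{u_{y}e^{-2u_{t}}}{u_{x}}$---vanishes on solutions of~(\ref{Th8}) with no extra computation required.

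The step I expect to be the main obstacle is confirming that the quadratic-in-$\lambda$ $\partial/\partial\lambda$-component of the commutator imposes no constraint beyond~(\ref{Th8}). Rather than grinding out the $\lambda$ and $\lambda^{2}$ coefficients by hand, I would invoke the contact structure: since $X_{H^{(t)}}$ and $X_{H^{(y)}}$ obey $\mathcal{L}_{X_{H}}\alpha^{(1)}=\mu\,\alpha^{(1)}$, a short Lie-derivative computation gives $\mathcal{L}_{[X_{H^{(t)}},X_{H^{(y)}}]}\alpha^{(1)}=\big(X_{H^{(t)}}\mu^{(y)}-X_{H^{(y)}}\mu^{(t)}\big)\alpha^{(1)}$, so the bracket is again a contact vector field with respect to $\alpha^{(1)}={\rm d}z-\lambda\,{\rm d}x$ and is therefore completely determined by the single scalar $-i_{[\,\cdot\,,\,\cdot\,]}\alpha^{(1)}$. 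Consequently the $\partial/\partial\lambda$-part is the forced $\lambda$-prolongation of the $\partial/\partial x$-part and carries no independent information; its vanishing follows from that of the scalar already reduced to~(\ref{Th8}). This closes the argument: for every $u$ solving~(\ref{Th8}) the extended fields commute, a common invariant $\psi$ exists for all $\lambda\in\mathbb{R}$, and the displayed system is the sought Lax--Sato linearization covering.
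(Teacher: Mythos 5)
Your proposal is correct and takes essentially the same route as the paper: you recast~\eqref{Th7} as the degenerate Hamilton--Jacobi pair with the contact Hamiltonians $H^{(t)}=\frac{(e^{-2u_{t}}-1)}{2u_{x}}\lambda$ and $H^{(y)}=-\frac{u_{y}e^{-2u_{t}}}{u_{x}}\lambda$, read the extended contact vector fields off~\eqref{Th4} (the linearity in $\lambda$ killing the $\partial/\partial z$-component), and reduce commutativity of $\tilde{X}_{H^{(t)}}$, $\tilde{X}_{H^{(y)}}$ to the relation~\eqref{Th8}, exactly as in Section~\ref{Sec_T3}. Your contact-bracket lemma showing that the bracket is again a contact field determined by the single scalar $k=h^{(y)}_{t}-h^{(t)}_{y}+h^{(t)}h^{(y)}_{x}-h^{(y)}h^{(t)}_{x}$, so that $\big[\tilde{X}_{H^{(t)}},\tilde{X}_{H^{(y)}}\big]=k\,\partial/\partial x-\lambda(k_{x}+\lambda k_{z})\,\partial/\partial\lambda$ and the $\lambda$- and $\lambda^{2}$-coefficients impose nothing beyond~\eqref{Th8}, merely makes explicit a step the paper leaves implicit, and it checks out by direct computation.
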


\section{Conclusions}

We have studied an interesting problem posed by I.~Krasil'shchik in \cite{Kras} concerning a differential-geometric explanation of the linearization
procedure for a given nonlinear differential-geometric relation $J^{1}\big(\mathbb{R}^{n}\times\mathbb{R}^{2};\mathbb{R}\big)|_{\mathcal{E}}$ in the
jet manifold $J^{1}\big(\mathbb{R}^{n}\times \mathbb{R}^{2};\mathbb{R}\big)$, $n\in \mathbb{Z}_{+}$, realizing a compatible \textit{covering} \cite{BaKrMoVo,OdSo} for the corresponding nonlinear differential equation $\mathcal{E}[x,\tau;u]=0$, imbedded into some associated jet manifold $J^{k}\big(\mathbb{R}^{n}\times\mathbb{R}^{2};\mathbb{R}^{m}\big)$ for some $k,m\in\mathbb{Z}_{+}$. We have analyzed this covering jet manifold scheme, its relation to the invariant theory of the quasi-linear associated vector fields and applications to the Lax--Sato type integrability of nonlinear dispersionless differential systems. In particular, we discussed the compatible Jacobi type nonlinear relations, the related contact geometry and its application to the linearization covering scheme. The devised techniques were applied to such nonlinear Lax--Sato integrable equations as Gibbons--Tsarev, ABC, Manakov--Santini and the differential Toda singular manifold equations.

\subsection*{Acknowledgements}

The author cordially thanks Professors M.~B{\l}aszak, B.~Szablikowski and J.~Cie\'{s}linski for useful discussions of the results during the
International Conference in Functional Analysis dedicated to the 125$^{\rm th}$ anniversary of Stefan Banach held on 18--23 September, 2017 in Lviv,
Ukraine. He is also greatly indebted to Professors V.E.~Zakharov (University of Arizona, Tucson) and J.~Szmigelski (University of Saskatchewan,
Saskatoon) for their interest in the work and instructive discussions during the XXXV Workshop on Geometric Methods in Physics, held in Bia{\l}owie\.{z}a, Poland. The author is grateful to Professor B.~Kruglikov (University of Troms{\o}, Norway) for interest in the work and mentioning some misprints and important references, which were very helpful in preparing the manuscript. He is also indebted to the referees for their remarks and instrumental suggestions. Last but not least, thanks go to the Department of Mathematical Sciences of the NJIT (Newark NJ, USA) for the invitation to visit the NJIT during the Summer Semester of 2017, where an essential part of this paper was completed. Local support from the Institute of Mathematics at the Cracow University of Technology is also much appreciated.

\pdfbookmark[1]{References}{ref}
\LastPageEnding

\end{document}